\newcommand*{\mailto}[1]{\href{mailto:#1}{\nolinkurl{#1}}}
\newtheorem{theorem}{Theorem}[section]
\newtheorem{lemma}[theorem]{Lemma}
\newtheorem{corollary}[theorem]{Corollary}
\newtheorem{remark}[theorem]{Remark}
\newtheorem{hypothesis}[theorem]{Hypothesis}
\newcommand{\R}{\mathbb{R}}
\newcommand{\Z}{\mathbb{Z}}
\newcommand{\C}{\mathbb{C}}
\newcommand{\nn}{\nonumber}
\newcommand{\beq}{\begin{equation}}
\newcommand{\eeq}{\end{equation}}
\newcommand{\bea}{\begin{eqnarray}}
\newcommand{\eea}{\end{eqnarray}}
\newcommand{\ol}{\overline}
\newcommand{\pa}{\partial}
\newcommand{\ti}{\tilde}
\newcommand{\id}{\mathbb{I}}
\newcommand{\I}{\mathrm{i}}
\newcommand{\E}{\mathrm{e}}
\newcommand{\re}{\mathop{\mathrm{Re}}}
\newcommand{\im}{\mathop{\mathrm{Im}}}
\newcommand{\Sini}{\Sigma^{\text{ini}}}
\newcommand{\mini}{m^{\text{ini}}(k)}
\DeclareMathOperator{\res}{Res}
\newcommand{\si}{\sigma}
\newcommand{\la}{\lambda}
\newcommand{\ga}{\gamma}
\newcommand{\om}{\omega}
\numberwithin{equation}{section}
\newcommand{\sigI}{\begin{pmatrix} 0 & 1 \\ 1 & 0 \end{pmatrix}}
\newcommand{\rI}{\begin{pmatrix}  1 & 1 \end{pmatrix}}
 \newcommand{\noprint}[1]{}
\begin{document}

\title[KdV shock waves]{Asymptotics of KdV shock waves via the Riemann--Hilbert approach}

\author[I. Egorova]{Iryna Egorova}
\address{B. Verkin Institute for Low Temperature Physics \\ and Engineering of National Avcademy\\ of Sciences of Ukraine\\ 47,Lenin ave\\ 61103 Kharkiv\\ Ukraine}
\email{\href{mailto:iraegorova@gmail.com}{iraegorova@gmail.com}}

\author[M. Piorkowski]{Mateusz Piorkowski}
\address{Faculty of Mathematics\\ University of Vienna\\
Oskar-Morgenstern-Platz 1\\ 1090 Wien}
\email{\href{mailto:Mateusz.Piorkowski@univie.ac.at}{Mateusz.Piorkowski@univie.ac.at}}

\author[G. Teschl]{Gerald Teschl}
\address{Faculty of Mathematics\\ University of Vienna\\
Oskar-Morgenstern-Platz 1\\ 1090 Wien}
\email{\href{mailto:Gerald.Teschl@univie.ac.at}{Gerald.Teschl@univie.ac.at}}
\urladdr{\href{http://www.mat.univie.ac.at/~gerald/}{http://www.mat.univie.ac.at/\string~gerald/}}

\keywords{Riemann--Hilbert problem, KdV equation, shock wave}
\subjclass[2000]{Primary 37K40, 35Q53; Secondary 37K45, 35Q15}
\thanks{Research supported by the Austrian Science Fund (FWF) under Grants No.\ P31651 and W1245.}

\begin{abstract}
This paper discusses some general aspects and techniques associated with the long-time asymptotics of steplike solutions of the Korteweg--de Vries (KdV) equation via vector Riemann--Hilbert problems. We also elaborate on an ill-posedness of  the matrix Riemann--Hilbert problem for the KdV case in the class of matrices with square integrable singularities.  Furthermore, we refine the asymptotics for the shock wave in the Whitham zone derived previously and rigorously justify it for a more general class of initial data. In particular, we clarify the influence of resonances and of the discrete spectrum on the leading asymptotics.
\end{abstract}

\maketitle

\section{Introduction}

The nonlinear steepest descent (NSD) analysis for oscillatory Riemann--Hilbert problems (RHP) is a versatile tool in asymptotic analysis. This procedure naturally starts from
a reformulation of the original scattering problem as a Riemann--Hilbert factorization problem. In most cases this will be a matrix RHP as these are typically more convenient to analyze. Indeed, the fact
that a nonsingular solution can be used to cancel jumps on certain parts of the contour is a crucial trick which lies at the heart of the theory. However, for some problems, most prominently the Korteweg--de Vries equation
\beq\label{kdv}
q_t(x,t)=6q(x,t)q_x(x,t)-q_{xxx}(x,t), \quad (x,t)\in\R\times\R_+,
\eeq
it turned out that a vector RHP is the right choice. This is related to the fact that even in the simplest case of a single soliton there is a nontrivial solution of the
associated vanishing problem (see \cite{GT}). However, this is in contradiction to the classical uniqueness result for matrix RHPs and shows that the matrix problem cannot
have a solution in this situation. The remedy, as pointed out in \cite{GT}, is to work with the vector RHP and impose an additional symmetry condition to retain uniqueness.

Next, recall that the asymptotic analysis of such a RHP usually consists of three steps: The first step deforms the problem in such a way that the leading asymptotic contribution
is revealed. In the second step the parts of the jump which are expected not to contribute to the leading asymptotics are dropped, yielding a model problem which then needs to
be solved explicitly. In most cases, it is possible to find a matrix solution to this model problem and hence the final step, namely showing that the solution of the model problem
indeed asymptotically approximates the solution of the original RHP, can be performed using the well-established tools for matrix problems. However, for model problems leading to explicit solutions in terms of Jacobi theta functions, finding a nonsingular\footnote{In what follows \textit{nonsingular} refers to the matrix solution being invertible with at most $L^2$-integrable singularities on either side of the jump contour} matrix solution is not always possible (see \cite[Thm.~5.6]{BKT16}, \cite[Sect.~3]{BC12}, \cite[Sect.~3]{GGM}).

The main purpose of the present note is to study in depth such a RHP coming from the KdV equation, having only singular matrix model solutions for certain exceptional values of the parameters $x$ and $t$. Indeed for these values, the initial and the model problems do not have invertible bounded matrix solutions with admissible square integrable singularities in the points of discontinuity of the contour. We will refer to this feature as the ill-posedness of the matrix RHP for the KdV equation. 

The specific example that we will consider is the RHP associated with the long-time asymptotical behavior of shock waves for the KdV equation. Here the KdV shock problem is the Cauchy problem for  \eqref{kdv} with initial data $q(x,0)=q(x)$ satisfying:
\beq \label{ini}
\left\{ \begin{array}{ll} q(x)\to 0,& \ \ \mbox{as}\ \ x\to +\infty,\\
q(x)\to - c^2,&\ \ \mbox{as}\ \ x\to -\infty, \quad c>0.\end{array}\right.
\eeq
We recall that the asymptotic behavior of the shock wave  was first described on a physical level of rigor in the pioneering works of Gurevich and Pitayevskii \cite{GP}, \cite{gp2}. By applying  the Whitham approach  to the pure step initial data ($q(x)=0$ for $x>0$ and $q(x)=-c^2$  for $x\leq 0$), the authors derived the leading asymptotics in terms of a modulated elliptic wave.  For arbitrary steplike initial data \eqref{ini}  the analogous asymptotic term was calculated in \cite{EGKT} and \cite{EGT} using the NSD method.  In particular, it was shown that in the elliptic zone $ -6c^2 t<x<4c^2 t$ the shock wave is expected to be close to a modulated one gap solution of the KdV equation as $t\to\infty$. However, this has not been rigorously justified until now.
 
The main result of this paper is the completion of the asymptotic analysis for the shock wave in the Whitham zone, in the framework of the standard NSD method. Even though the inverse scattering transform for the KdV equation is given in terms of a vector RHP, the NSD approach involves building a matrix solution to the model RHP in order to match it with the local parametrix solutions. Since
the nonsingular  matrix model solution does not exist for certain arbitrary large pairs $x$ and $t$, we will instead use a singular matrix model solution which, despite its singular behavior, can be used to bound the error term in the asymptotics, as shown in \cite[Sect.~3]{GGM}. Note that for decaying initial data or rarefaction waves, meaning $q(x)\to 0$ as $x\to +\infty$ and $q(x)\to c^2$ as $x\to -\infty$, the nonsingular matrix model solutions always exist (see \cite{AELT}, \cite{GT}). 

As for the shock wave case, to characterize the pairs $(x,t)$ for which the nonsingular matrix model solution fails to exist, we must recall the trace formula for a finite gap KdV  solution. Denote by $\xi=\frac{x}{12t}$ the slowly varying parameter and consider values $(x,t)$ satisfying
\beq\label{domain}\xi\in \mathcal I_\varepsilon:=[-\frac{c^2}{2}+\varepsilon,\frac{c^2}{3}-\varepsilon],\eeq for an arbitrary small $\varepsilon>0$. Then,  as is shown in  \cite{GP}, \cite{EGKT}, there exists a smooth monotonously increasing positive function $a=a(\xi)$ such that $a(-\frac{c^2}{2})=0$ and $a(\frac{c^2}{3})=c $. This function characterizes the Whitham zone of the modulated  elliptic wave  $q^{mod} (x,t,\xi)$, which is 
 the periodic one gap solution of the KdV equation on the ray $\xi=\mbox{const}$. This one gap solution is associated with the spectrum \beq\label{spectr}\mathfrak G(\xi):=[-c^2, -a^2(\xi)] \cup \R_+,\eeq and with the initial Dirichlet divisor $(\la(0,0,\xi), \pm)$ defined via the scattering data of the potential \eqref{ini} by the formulas \eqref{invJ} and \eqref{deff}-\eqref{svyazF} below. 
Let $\la(x,t,\xi)\in [-a^2(\xi), 0\,]$ be the solution of the Dubrovin equations (\cite[Ch.~12]{Lev}) corresponding to the initial value $(\la(0,0,\xi), \pm)$. Then the well-known trace formula implies
 \[
 q^{mod}(x,t,\xi)=-c^2 - a(\xi)^2 -2\la(x,t,\xi).
 \]
We will show (see Remark \ref{rem33}) that the set of local minima of $q^{mod}(x,t,\xi)$:
\[
\mathcal O(\xi)=\{ (x,t):\ \la(x,t,\xi)=0\},
\]
coincides with the set of points where the associated matrix model problem has no nonsingular  solution. 
Evidently, these pairs $(x,t)$ appear for each $\xi\in I_\varepsilon$ and for arbitrary large $t$.

In turn, the circumstances which lead to the ill-posedness of the  initial matrix RHP associated  with the shock wave for certain (arbitrary large) points $(x,t)$ are the following.   Let $\phi(k,x,t)$ be the right Jost solution to the underlying spectral equation of the problem \eqref{kdv}--\eqref{ini}:
\beq\label{ell}L(t)y=-\frac{d^2}{dx^2} y + q(x,t) y=k^2 y,\eeq normalized as
\beq\label{limsi}
\lim_{x \to  +\infty} \E^{- \I kx} \phi(k,x,t) =1.
\eeq
In Section~\ref{sec:inivrhp} we show that if $\phi(0,x,t)=0$ for a  pair  $(x,t)$, then the nonsingular matrix solution for
the initial RHP does not exist. In connection with this observation an additional spectral problem appears: to find conditions which would guarantee that the right Jost solution associated with the shock wave  is nonzero at the edge of the continuous spectrum for sufficiently large $x$ and $t$ with
$(x,t)\in\mathcal D_\varepsilon$, where
\beq\label{DE}
\mathcal D_\varepsilon:=\{ (x,t)\in\R\times\R_+:\ \frac{x}{12 t}\in \mathcal I_\varepsilon\}. 
\eeq
It should be noted that the same condition $\phi(0,x,t)=0$ leads to the ill-posedness of the matrix RHP in the decaying case $q(x,t)\to 0$, $x\to\pm\infty$ too. 
Unlike in the steplike case, here we can propose sufficient conditions which guarantee  $\phi(0,x,t)\neq 0$. Indeed, assuming that the  discrete spectrum is absent in the decaying case, the Jost solutions are positive  below the spectrum (cf.\ \cite[Corollary 2.4]{GST}) and hence also at the boundary of the spectrum
$k=0$ by continuity (note that the zeros of a nontrivial solution of a Sturm--Liouville equation must always be simple). Thus a nonsingular matrix solution always exists
in this situation (this also follows from \cite[Theorem~9.3]{Zhou}). However, in the presence of discrete spectrum this is no longer true.

Our main result is the following

\begin{theorem} \label{theormain}
Let $q(x,t)$ be the unique solution of the initial value problem \eqref{kdv}--\eqref{ini} with the initial data satisfying
\beq\label{decay}
\int_0^{+\infty} \E^{\eta x}(|q(x)| + |q(-x)+c^2|)dx<\infty,\quad \  x^4 q^{(i)}(x)\in L^1(\R),\quad i=1,...,7,
\eeq
for some positive $\eta > 0$.
For any $\xi\in \mathcal I_\varepsilon$ with $\varepsilon > 0$ (see \eqref{domain}),  let $a=a(\xi)\in (0,c)$ be  defined implicitly by 
\beq\label{defa}
\int_0^{\I a}\left(k^2 + \xi +\frac{c^2-a^2}{2}\right)\sqrt{\frac{k^2 + a^2}{k^2 + c^2}}\  dk =0.
\eeq 
Let $p_0=p_0(\xi)$ be the point on the two-sheeted Riemann surface associated with  $\mathfrak G(\xi)$ (see \eqref{spectr}),
uniquely defined via the Jacobi inversion problem
\beq\label{defp0}\int_{-a^2}^{p_0} \frac{d\la}{\sqrt{\lambda (\lambda +c^2)(\lambda+a^2)}}=
\I \Delta (\xi),\eeq
with
\beq\label{deltaini}
\Delta(\xi)=\frac{\int_{\I a}^{\I c}\frac{2\log\left|T(s)\prod_{j=1}^N \frac{s-\I\kappa_j}{s+\I\kappa_j}\right| +\log\left|\frac{ s+\I c \ell }{s}\right|}{\left| (s^2 +c^2)(s^2+a^2)\right|^{1/2}}ds}{\int_{0}^{\I a}\left( (s^2 +c^2)(s^2+a^2)\right)^{-1/2}ds}-\frac {\pi\ell}{ 2},
\eeq 
where:
\begin{itemize}
\item $T(k)$ is the right transmission coefficient for the initial datum \eqref{decay};
\item $-\kappa_1^2<...<-\kappa_N^2$ is the discrete spectrum of the problem; 
\item
 $\ell=-1$ if the initial datum has a resonance at the point $\I c$, and $\ell=1$ in the general (nonresonant) case. 
\end{itemize}
Let $q^{mod}(x,t,\xi)$ be the periodic (one gap) solution to the KdV equation associated with the spectrum $\mathfrak G(\xi)$ and  the initial Dirichlet divisor $p_0=(\la(0,0,\xi), \pm)$. Then for all  $x\to\infty$, $t\to +\infty$ such that  $(x,t)\in\mathcal D_\varepsilon$, the following asymptotics is valid: 
\beq\label{as}
q(x,t)=q^{mod}(x,t,\frac{x}{12 t}) + O(t^{-1}),
\eeq
where the error term $O(t^{-1})$ is uniform with respect to $\xi\in \mathcal I_\varepsilon$.
\end{theorem}

Formula \eqref{as} is obtained in the framework of a standard NSD approach applied to a vector RHP. It includes some transformations (conjugations and deformations) which lead to an equivalent RHP with the jump matrix asymptotically close, as $t\to \infty$, to an exactly solvable model RHP except in small vicinities of two extreme points $\pm \I a(\xi)$. The approach also involves the construction of a proper matrix model solution  and an associated  matrix solution of the local parametrix problems. However, when performing this analysis in the KdV steplike  case, it is essential to take into account some specific features of the vector RHP. Note that unlike the matrix RHP, the proof of uniqueness for a vector RHP is typically more sophisticated and depends on particular properties of the jump matrix and of the contour, as well as on the class of admissible singularities for the solution. That is why it seems important for us   to  perform NSD  deformations and conjugations in a way that does not affect this uniqueness. To this end, in each transformation we impose additional symmetry assumptions on the contour, on the jump matrix and on the solution itself, including the model problem solution (see Hypothesis~\ref{remsym}). 

The solution $(m_1(k,x,t), m_2(k,x,t))$ of the initial RHP is unique (see Theorem \ref{thm:vecrhp}) and satisfies the aforementioned symmetry assumption. This symmetry requirement implies a symmetry of the "error vector", which in turn, allows  us to apply a new formula \beq\label{main1}
q(x,t)=\lim_{k\to\infty} 2 k^2 \left(m_1(k,x,t)m_2(k,x,t) -1\right)
\eeq for computing the leading term of the asymptotics, and this essentially simplifies the final asymptotical analysis. 

Note that the   traditional formula which connects the potential $q(x,t)$ with the solution of the initial RHP {\it (i)--(iii)}, Theorem \ref{thm:vecrhp} is the following one:
\beq\label{previous}
\frac{\partial}{\partial x}\lim_{k\to\infty} 2\I k (m_1(k,x,t) - 1)= q(x,t).
\eeq
Formula \eqref{main1} not only avoids the necessity to justify the differentiation with respect to $x$ in the asymptotical expansion for $m_1(k,x,t)$, but also allows us to extract the asymptotics
from the solution of the model vector RHP in a shorter and more transparent way  (see Section~\ref{sec4}) compared to \cite{EGKT}, \cite{EGT} and \cite{GGM}. In particular, this approach allows us to apply  the trace formulas when computing asymptotics.

\section{Well-posedness  of the initial (meromorphic) vector RHP}

In this section we  recall the statement of the initial vector RHP for the KdV shock wave (see \cite{EGKT}) and prove its well-posedness.  Note that in the present study we weaken the decay conditions on the initial data compared to \cite{EGKT}, where it is assumed that \[|q(x)| + |q(-x)+c^2|=O(\E^{-(c+\eta)x}), \quad x\to +\infty,\quad \eta>0.\]  

We choose the still quite restrictive condition \eqref{decay} to avoid complications with the analytical continuation of the scattering data in the framework of the NSD method. However, \eqref{decay} also guarantees the existence  of the unique classical solution $q(x,t)$ for the Cauchy problem \eqref{kdv}--\eqref{ini} (cf. \cite{EMT5, gladka}) satisfying 
\beq\label{first}
\int_0^{+\infty}|x|(|q(x,t)| + |q(-x,t)+c^2|)dx<\infty, \qquad t\in\R.
\eeq
In turn, this means that the use of the inverse scattering transform for the formulation of the respective RHP is well grounded. 

We start with recalling some well known facts of the scattering theory for the step-like Schr\"odinger operator \eqref{ell} with emphasis on analytical properties of the scattering data due to \eqref{decay} and with a detailed description of the influence of resonance on them. 

 The spectrum of the operator \eqref{ell} with potential \eqref{first}
 consists of an absolutely
continuous part $[-c^2,\infty)$ plus a finite number of eigenvalues  $-\kappa_j^2\in(-\infty,-c^2)$,
$1\le j \le N$ enumerated as in Theorem \ref{theormain}.

 Let $\phi(k,x,t)$ be the right Jost solution of \eqref{ell} satisfying \eqref{limsi} and let  $\phi_1(k,x,t)$
be the  Jost solution asymptotically close to the free exponent associated with the left background:\beq\label{lims}
 \lim_{x \to  -\infty} \E^{ \I k_1 x} \phi_1(k,x,t) =1, \qquad k_1:=\sqrt{k^2 +c^2}.
\eeq
Here $k_1>0$ for $k\in[0,\I c)_r$. The last notation denotes the right side of the cut along the interval $[0,\I c]$. Accordingly, $k_1<0$ for $k\in[0,\I c)_l$,  the left side of the cut.
The left Jost solution admits the usual representation via the transformation operator (\cite{MAR}):
\[
\phi_1(k,x,t)=\E^{-\I k_1 x} +  \int^x_{-\infty}K_1(x,y,t)\E^{-\I k_1 y}dy, 
\]   
where $K_1(x,y,t)$ is a real-valued function with 
\beq\label{1}
|K_1(x,y,t)|\leq C\int^{\frac{x+y}{2}}_{-\infty}
|q(s,t)+c^2| ds.
\eeq
Note that the function $\phi(k,x,t)$ is a holomorphic function of $k$ in $\C^+:=\{ k \in \C \colon \im k > 0\}$ and continuous up to the real axis. It  is real-valued for $k\in [0,\I c]$, and does not have a discontinuity on this interval. As for the function $\phi_1(k,x,t)$, it is holomorphic in the domain $\C^+\setminus (0, \I c]$ and continuous up to the boundary, where
$[\phi_1(k,x,t)]_r=[\overline{\phi_1(k,x,t)]}_l\,$ for $k\in[0,\I c]$. 

We observe  that condition \eqref{decay} together with \eqref{1} imply that for $t=0$ the second left Jost solution: 
\[\breve{\phi_1}(k,x,0)=\E^{\I k_1 x} +  \int^x_{-\infty}K_1(x,y,0)\E^{\I k_1 y}dy,\]
defined  for $k_1\in \R$, where $\breve{\phi_1}=\overline{ \phi_1}$,  admits  an analytical continuation into the domain  \[\mathcal V=\{ k:\, \re k_1\in [-c, c],\quad 0<\im k_1(k) <  \eta \}.\]
Note that $\mathcal V$ is a neighbourhood of the interval $ [\I c, 0)$.
Then the limiting values satisfy
\beq\label{soed} [\breve{\phi_1}(k,x,0)]_r=[\phi_1(k,x,0)]_l,\quad  [\breve{\phi_1}(k,x,0)]_l= [\phi_1(k,x,0)]_r,\ \ \mbox{for}\  k\in [0,\I c].\eeq
For $k\in\mathcal V$ introduce two Wronskians :
\[\aligned
W(k)&= \phi_1(k,x,0)\phi^\prime(k,x,0) -\phi_1^\prime(k,x,0)\phi(k,x,0);\\  \breve W(k)&= \breve\phi_1(k,x,0)\phi^\prime(k,x,0) -\breve \phi_1^\prime(k,x,0)\phi(k,x,0),\endaligned
\] where $f^\prime=\frac{\pa}{\pa x} f$. 
Then by \eqref{soed}
\beq\label{soed2} [W(k)]_r= [\breve W(k)]_l=[\overline{W(k)}]_l=[\overline{\breve W(k)}]_r.\eeq
The Wronskian $W(k)$ of the Jost solutions is in fact a holomorphic function in $\C^+\setminus (0, \I c]$ with simple zeros at points $\I\kappa_j$ of the discrete spectrum. It is continuous up to the boundary of the domain, with the only possible additional zero at the point $k=\I c$, the edge of the continuous spectrum. Unlike the case considered
in \cite{EGKT}, we admit the possible resonance at the point $\I c$, that is, we do not  assume the condition 
$
W(\I c)\neq 0
$
corresponding to the nonresonant case. In the resonant case the Wronskian has a square root zero at $k = \I c$ (cf. \cite{EGLT}). 

In $\mathcal V$ introduce also the function
\beq\label{defchi1} \chi(k):=\frac{4 k k_1}{ W(k) \breve W(k)}.  \eeq
From \eqref{soed2}  it follows that its limiting values satisfy
\beq\label{cont}[\chi(k)]_r=\I |\chi(k)|,\quad [\chi(k)]_l= -\I |\chi(k)|,\quad k\in[0,\I c].\eeq 
We also observe that
\beq\label{chinonres} \chi(k)=C (k-\I c)^{\ell/2}(1 + o(1)), \quad C\neq 0,\quad k\to \I c,\eeq where
\[ \ell:=\begin{cases} \begin{array} {rcl} 1, &\mbox{if}\quad W(\I c)\neq 0& (\mbox{nonresonant case});\\
-1,&\mbox{if}\quad W(\I c) = 0& (\mbox{resonant case}).\end{array}\end{cases}\]
Let $R(k)$ be the right reflection coefficient of the initial data satisfying  \eqref{decay} and let \[\gamma_j:=\|\phi(\I\kappa_j,\cdot,0)\|^{-2}_{L^2(\R)}\] be the  right normalizing constants for $j=1,...,N$. The set
\beq\label{scat5}\{ R(k),\; k\in \R; \quad |\chi(k)|, \ k\in[0, \I c];\,\quad  \I\kappa_j, \ \gamma_j,\ \ j=1,...,N\},
\eeq
constitute the minimal set of the  scattering data to reconstruct uniquely the solution of the initial value problem \eqref{kdv}--\eqref{ini} (cf. \cite{BET}, Corollary 4.4) 

Next, the Jost  solutions \eqref{lims} and \eqref{limsi} are connected by the  scattering relation
\beq\label{sct}
T(k,t) \phi_1(k,x,t) =  \ol{\phi(k,x,t)} +
R(k,t) \phi(k,x,t),  \qquad k \in \R,\eeq
where $T(k,t) =\frac{2 \I k}{W(k,t)}$ and $R(k,t)$  are the right  transmission and reflection coefficients. We use the notation $T(k) = T(k,0)$ and
$R(k)=R(k,0)$. Observe that \beq\label{deftr}|T(k)|^2=k\left[\frac{\chi(k)}{\sqrt{k^2 + c^2}}\right]_{r,l}, \quad k\in [0,\I c].\eeq

We define a vector-valued function $m(k,x,t) = (m_1(k,x,t), m_2(k,x,t))$, meromorphic in the spectral parameter $k \in \C \setminus(\R \cup [-\I c, \I c])$ for fixed $x,t$, as follows
\beq\label{defm}
m(k,x,t)= \left\{\begin{array}{cl}
\begin{pmatrix} T(k,t) \phi_1(k,x,t) \E^{\I kx},  & \phi(k,x,t) \E^{-\I kx} \end{pmatrix},
& k\in \C^+\setminus (0,\I c], \\
m(-k,x,t)\sigma_1,
& k\in\C^-\setminus [-\I c, 0),
\end{array}\right.
\eeq
where $\sigma_1=\begin{pmatrix}0&1\\1&0\end{pmatrix} $ is the first Pauli matrix. The vector function $m(k,x,t)$ evidently has at most simple poles at the points $\pm \I \kappa_j$. 
It is known that the following asymptotical formula, for $k\to \infty$, holds:
\[
m(k,x,t)= (1,1) -\frac{1}{2\I k}\left(\int_x^{+\infty}q(y,t)dy\right) (-1,1) + O\left(\frac{1}{k^2}\right).
\]
This expansion allows us to extract the shock wave solution using formula \eqref{previous}. However,  as was mentioned in the introduction, the formula \eqref{main1}, which can be computed using the well-known asymptotic formulas for the Weyl functions, is more convenient.

Indeed, it is known that for $k$ large enough   both functions $\phi(k,x,t)$ and $\phi_1(k,x,t)$ do not vanish  for all $x$ and $t$. Thus,
\begin{align*}
m_1(k,x,t)m_2(k,x,t)&=T(k,t)\phi(k,x,t)\phi_1(k,x,t)\\
&=\frac{2\I k}{\frac{\phi^\prime(k,x,t)}{\phi(k,x,t)} - \frac{\phi_1^\prime(k,x,t)}{\phi_1(k,x,t)}}=
\frac{2\I k}{\mathfrak m(k,x,t) -\mathfrak m_1(k,x,t)},
\end{align*}
where $\mathfrak m$ and $\mathfrak m_1$ are the right and left Weyl functions corresponding to the potential $q(x,t)$.  For $k\to\infty$ we have (cf. \cite{BM97}):
\[\aligned \mathfrak m(k,x,t)&=\I k +\frac{q(x,t)}{2\I k}  +\frac{f(x,t)}{4 k^2}+O(k^{-3}),\\ \mathfrak m_1(k,x,t)&=-\I k -\frac{q(x,t)}{2\I k}  +\frac{f(x,t)}{4 k^2} +O(k^{-3}).\endaligned\]
Thus,
\[ m_1(k,x,t)m_2(k,x,t) - 1=\frac{2\I k}{2\I k + \frac{q(x,t)}{\I k} + O(k^{-3}) } - 1=\frac{q(x,t)}{2k^2} + O(k^{-4}),\]
which proves \eqref{main1}.
\hfill $\square$

The following existence/uniqueness result is then valid:

\begin{theorem}\label{thm:vecrhp} 
Let \begin{itemize}\item the potential $q(x)$ satisfy \eqref{ini} and \eqref{decay};
\item the set \eqref{scat5} be
its right scattering data;
\item $\Sigma=\R\cup [\I c, -\I c]$ be the jump contour oriented left-to-right\,$\cup$\,top-down;
\item the phase function $\Phi(k)=\Phi(k,x,t)$ be defined by the formula: 
\end{itemize}\[
\Phi(k)= 4 \I k^3+\I k \frac {x}{t}, \quad k\in\C.
\] Then $m(k)=m(k,x,t)$ defined in \eqref{defm} for all $(x,t)\in\R\times\R_+$
is the unique solution of the following vector RHP:

\noindent Find a vector-valued function $m(k)$, meromorphic  away from $\Sigma$,  satisfying:
\begin{enumerate}
\item The jump condition $m_+(k)=m_-(k) v(k)$, where
\beq \label{eq:jumpcond}
v(k)=\left\{\begin{array}{cc}\begin{pmatrix}
1-|R(k)|^2 & - \ol{R(k)} \E^{-2t\Phi(k)} \\
R(k) \E^{2t\Phi(k)} & 1
\end{pmatrix},& k\in\R,\\
 \ &\ \\
\begin{pmatrix}
1 & 0 \\
\I|\chi(k)| \E^{2t\Phi(k)} & 1
\end{pmatrix},& k\in [\I c, 0],\\
 \ &\ \\
\sigma_1 (v(-k))^{-1}\sigma_1,& k\in [0, -\I c];\\
\end{array}\right.
\eeq
\item
the pole conditions
\beq
\aligned \label{eq:polecond}
\res_{\I\kappa_j} m(k) &= \lim_{k\to\I\kappa_j} m(k)
\begin{pmatrix} 0 & 0\\ \I \ga_j^2 \E^{t\Phi(\I \kappa_j)}  & 0 \end{pmatrix},\\
\res_{-\I\kappa_j} m(k) &= \lim_{k\to -\I\kappa_j} m(k)
\begin{pmatrix} 0 & - \I \ga_j^2 \E^{t\Phi(\I \kappa_j)} \\ 0 & 0 \end{pmatrix},
\endaligned
\eeq
\item
the symmetry condition
\beq \label{eq:symcond}
m(-k) = m(k) \sigma_1,\quad k\in \C\setminus\Sigma,
\eeq
\item
and the normalization condition
\beq\label{normco}
\lim_{\kappa\to\infty} m(\I\kappa) = (1\quad 1).
\eeq
\item In addition, the function $m(k)$ has the following behavior in a vicinity of the point  $ \I c$: If $\chi(k)$ satisfies \eqref{chinonres} with $\ell=1$ then $m(k)$ has continuous limits as $k$ approaches $\I c$ from the domain $\C\setminus\Sigma$. If  $\ell=-1$ then  one has 
\beq\label{resi} \aligned m(k)&=\left(C_1(x,t)(k-\I c)^{-1/2}, C_2(x,t)\right) (1+o(1)) \ \ \  C_1C_2\neq 0; \ \text{or}\ \\  m(k)&=\left( C(x,t), 0\right)(1 + o(1))\quad \mbox{as}\ \  k\to \I c.\endaligned \eeq
At the point $-\I c$ the analog of condition \eqref{resi} holds by symmetry \eqref{eq:symcond}.
\end{enumerate}
\end{theorem}

\begin{proof} The facts that $m$ satisfies the jump condition \eqref{eq:jumpcond} and the pole conditions \eqref{eq:polecond} are established in \cite{EGKT}. 
Note that the jump matrix on $\R$ also satisfies the symmetry $v(k)=\sigma_1 (v(-k))^{-1}\sigma_1$. To prove uniqueness,
assume first that $\tilde m(k)$ and $\hat m(k)$ are two solutions for the RHP {\it (i)--(v)}. Then  $\mu(k):=\tilde m(k) - \hat m(k)$  satisfies {\it (i)--(iii), (v)} and instead of {\it (iv)} we have 
\[ \mu(k)=O(k^{-1}), \quad k\to\infty.\]
In $\C^+\setminus (0, \I c]$ introduce the meromorphic function
\[F(k)=\mu_1(k)\ol{\mu_1(\ol k)} + \mu_2(k)\ol{\mu_2(\ol k)},\]
where $\mu_{1,2}$ are the components of $\mu$.
Then $F(k)=O(k^{-2})$ as $k\to\infty$.  Note that since the exact values of the  constants $C_1, C_2 $ and $C$ in \eqref{resi} are not specified, they may be different for $\ti m$ and $\hat m$. Furthermore, since $-\ol k=k$ for $k\in\I \R$, it follows from the symmetry condition {\it (iii)} that for such $k$,  $\mu_i(\ol k)=\mu_j(k)$, $i\neq j$. We thus get  $F(k)=O((k-\I c)^{-1/2})$ as $k\to \I c$ when $\ell=-1$. For the nonresonant case $\ell=1$ the function $F(k)$ has continuous limits everywhere on $\R\cup[0, \I c]$. Let us denote for simplicity $F_r(k)$ and $F_l(k)$ the limiting values of $F$ from the right and left sides of $[0, \I c]$, and $F_+(k)$ for the limiting values on the real axis from above. Then by the symmetry condition \eqref{eq:symcond} we get
\[\aligned
F_+(k)&=\mu_{1, +}(k)\ol{\mu_{1,-}(k)} + \mu_{2, +}(k)\ol{\mu_{2,-}(k)},\\
F_r(k)&=\mu_{1, r}(k)\ol{\mu_{2,l}(k)} + \mu_{2, r}(k)\ol{\mu_{1,l}(k)},\\
F_l(k)&=\mu_{1, l}(k)\ol{\mu_{2,r}(k)} + \mu_{2, l}(k)\ol{\mu_{1,r}(k)}.\endaligned\]
The jump condition \eqref{eq:jumpcond} implies
\[ F_+(k)=(1-|R(k)|^2)|\mu_{1,-}|^2 + |\mu_{2,-}|^2+
2\I\im\left(R(k)\E^{2t\Phi(k)}\ol{\mu_{1,-}(k)} \mu_{2,-}(k)\right),\]
\beq\label{314}\aligned F_l(k)&=\re\left(\mu_{1,l}(k)\ol{\mu_{2,l}(k)}\right) 
-\I|\mu_{2,l}(k)|^2 |\chi(k)|\E^{2t\Phi(k)},\\
F_r(k)&=\re\left(\mu_{1,l}(k)\ol{\mu_{2,l}(k)}\right) 
+\I|\mu_{2,l}(k)|^2|\chi(k)|\E^{2t\Phi(k)}.\endaligned\eeq
Note that $\Phi(k)\in\R$ for $k\in \I \R$.
From this and  \eqref{314} it follows that
\beq\label{315} \aligned\re F_l(k)&=\re F_r(k)= \re\left(\mu_{1,l}(k)\ol{\mu_{2,l}(k)}\right),\\ & \im F_l(k)=-\im F_r(k)\in\R_-.\endaligned\eeq
The pole condition \eqref{eq:polecond} is satisfied by the vector $\mu(k)$. Alongside with the symmetry property this implies 
\[\res_{\I\kappa_j} F(k)=2\I\gamma_j^2|\mu_2(\I\kappa_j)|^2\in\I\R_+.\]
Let now $\omega>c$ be arbitrary large and let $\mathcal C_\omega$ be the  boundary of the domain $\left(\C^+\cap\{k:\ \ |k|<\omega\}\right)\setminus (0,\I c]$. We treat $\mathcal C_\omega$ as a closed contour oriented counterclockwise. By Cauchy's theorem
\[\oint_{\mathcal C_\omega}F(k)dk=2\pi\I\sum_{j=1}^N \res_{\I\kappa_j} F(k),\]and since $F(k)=O(k^{-2})$ as $k\to\infty$,
the integral over the upper semicircle will asymptotically vanish as $\omega\to\infty$ and we get
\[\int_\R F_+(k)dk +\int_0^{\I c} F_l(k)dk + \int_{\I c}^0 F_r(k)dk +4\pi\sum_{j=1}^N \gamma_j^2|\mu_2(\I\kappa_j)|^2=0.\]
Taking into account \eqref{315}, the real part of this integral reads
\[\aligned 0 &=\int_\R \left((1-|R(k)|^2)|\mu_{1,-}|^2 + |\mu_{2,-}|^2\right) dk + 2\int_0^c |\mu_{2,l}(\I s)|^2 |\chi(\I s)|\E^{2t \Phi(\I s)}ds\\
&+4\pi\sum_{j=1}^N \gamma_j^2|\mu_2(\I\kappa_j)|^2.\endaligned\]
But $|R(k)|<1$ for $k\in\R\setminus\{0\}$, and therefore all summands in the last formula are non-negative. Thus, we obtain $\mu_2(\I\kappa_j)=0$ (which implies that $\mu_1(k)$ does not have a pole at $\I\kappa_j$) and
\[ \mu_{2,-}(k)=0,\ \ \mbox{for}\ \ k\in\R;\quad \mu_{2,l}(k) = 0, \ \ \mbox{for}\ \ k\in [\I c, 0]; \quad \mu_{1,-}(k)=0,\ \ \mbox{for}\ \ k\in\R.\]
From this and \eqref{eq:jumpcond} it immediately follows that $\mu_{1,+}(k) = \mu_{2,+}(k) = 0$ and $\mu_{2,r}(k)=\mu_{2,l}(k)=0$ for $k\in [\I c, 0]$. Thus, the function $\mu_{2}(k)$   is a holomorphic function in $\C$ with $\mu_2(k)\to 0$ as $k\to\infty$.  By Liouville's theorem $\mu_2(k)\equiv 0$ in $\C$. In turn, this identity and formula \eqref{eq:jumpcond} imply: $\mu_{1,r}(k)=\mu_{1,l}(k)$ for $k\in[\I c, 0]$. Therefore, $\mu_1(k)$ is also a holomorphic function in $\C$ vanishing at infinity, thus $\mu_1(k)\equiv 0$. This proves uniqueness. 

It remains to verify {\it (v)}. The case $\ell=1$ implies that the Wronskian $W(k,t)$ of the Jost solutions $\phi(k,x,t)$ and $\phi_1(k,x,t)$ does not vanish at $k=\I c$ for all $t$ (cf. \cite{EGT}, formula (6.2)).
 This implies that $T(k,t)$ is  bounded  and continuous as $k \to \I c$, and the same is true for the components of the vector $m$.

If $\ell=-1$ then  $W(\I c,t)=0$. Now, if, in turn $\phi(\I c, x,t)\neq 0$, then
$\phi_1(\I c, x, t)\neq 0$ (otherwise the Wronskian would not have zero at $k=\I c$).  This proves the first line of \eqref{resi}. If $\phi(\I c,x,t)=0$, then also $\phi_1(\I c,x,t)=0$.  Since $W(k,t)=\tilde C(t)(k-\I c)^{1/2}(1 + o(1))$ and $\phi_1(k,x,t)=\tilde C_1(x,t)(k-\I c)^{1/2}(1 + o(1))$ as $k\to \I c$, this proves the second line of \eqref{resi}.
\end{proof}
Theorem \ref{thm:vecrhp} guarantees the well-posedness of the initial meromorphic  vector RHP (IM RHP)
for all $(x,t)\in \R\times\R_+$. In the domain $\mathcal D_\varepsilon$ given by \eqref{DE}, \eqref{domain} where we intend to study and justify the asymptotics of its solution $m(k,x,t)$ as $k\to\infty$, the IM RHP admits an equivalent holomorphic statement.

\section {Holomorphic statement of the initial vector RHP}
\label{sec:inivrhp}

In this section we take a closer look at the ill-posedness of the associated matrix RHP. Let $a(\xi)$ be defined implicitly by \eqref{defa}, then as shown in \cite{EGKT}:
\[0<a(-\frac{c^2}{2}+\varepsilon)\leq a(\xi)\leq a(\frac{c^2}{3} -\varepsilon)<c.\]
Recall that the discrete spectrum is denoted by $-\kappa_j^2$, $j = 1, \dots , N$, with  $c<\kappa_1<...<\kappa_N$. Choose $\rho>0$ sufficiently small, such that 
\beq\label{defrhou}\aligned
\rho&<\frac{1}{4}\min\left\{\sqrt{c^2 + \eta^2}-c, \  \kappa_N - c,\ a(-\frac{c^2}{2}+\varepsilon),\right.\\ &\left. c- a(\frac{c^2}{3} -\varepsilon),\  \min\limits_{j=2,..,N}\,|\kappa_{j-1}-\kappa_j|,\ \rho_1\right\},\endaligned\eeq
 where $\eta>0$ is the decay estimate from \eqref{decay} and $\rho_1>0$ is defined implicitly by formula \eqref{rho1} below. Note, that since  $k_1 = \sqrt{c^2 + k^2}$ and $\eta>\sqrt{c^2 + \eta^2}-c$, the reflection coefficient $R(k)$ and the function $\chi(k)$ from \eqref{eq:jumpcond} are well defined in the domains 
\beq\label{Ommeg}\aligned\Omega_R&:=\{k: \rho>\im k>0\}, \ \mbox{ and}\\
 \ \Omega_\chi&:=\{k: \im k\in (0,c+\rho),\ |\re k|< \rho\}\setminus (0,\I c]\},\endaligned\eeq respectively, up to their boundaries.
 
Denote by \[\mathbb D_j:=\{k:\, |k-\I\kappa_j|<\rho.\}, \quad j = 1, \dots, N\]
and by
\[\mathbb T_j:= \partial \mathbb D_j =\{k:\ |k-\I\kappa_j|=\rho\},\quad j=1, \dots ,N\]
the small nonintersecting contours around the points of the discrete spectrum oriented counterclockwise.
Let $\mathcal C:=\{k:\, \im k=\rho\}$ be the upper boundary of $\Omega_R$ considered as a contour oriented from left to right.
We observe that with our choice of $\rho$ (cf. \eqref{defrhou})
\[\mathrm{dist}\, (\mathbb T_1,\, \ol{\Omega_\chi})>2\rho, \quad \mathrm{dist}\, (\I a\left(-c^2/2+\varepsilon\right),\ \ol{\Omega_R})>3\rho.\]
Introduce also the functions: 
\beq\label{Blaschke}P(k):=\prod_{j=1}^N \frac{k+\I\kappa_j}{k-\I\kappa_j},\quad k\in\C; \quad Q(k):=\left(\frac{k-\I c}{k+\I c}\right)^{\frac \ell 4},\quad k\in \C\setminus [-\I c, \I c ];\eeq
where $\ell$ is as in Theorem \ref{theormain} and $Q(\infty)=1$. 
 
Redefine now the solution $m(k)=m(k,x,t)$ of the IM RHP as follows:
\beq\label{defmini}
m^{\text{ini}}(k)=\begin{cases}
\begin{array}{ll} m(k)A_j(k)\left(P(k)Q(k)\right)^{-\sigma_3}, & k\in\mathbb D_j,\ \ j=1,..,N;\\[2mm]
m(k)A_0(k)\left(P(k)Q(k)\right)^{-\sigma_3}, & k\in \Omega_R;\\[2mm]
m(k)\left(P(k)Q(k)\right)^{-\sigma_3}, & k\in \C^+\setminus(\ol{\Omega_R}\cup [0,\I c]\cup\cup_{j=1}^N \ol{\mathbb D_j});\\[2mm]
m^{\text{ini}}(-k)\sigma_1,& k\in\C^-,\end{array}\end{cases}\eeq
where  we denoted \[
A_j(k)=\begin{pmatrix} 1& -\frac{k-\I\kappa_j}{\I\gamma_j^2\E^{2t\Phi(\I\kappa_j)}}\\0&1\end{pmatrix},\quad A_0(k)=\begin{pmatrix} 1&0\\ -R(k)\E^{t\Phi(k)} & 1\end{pmatrix}, \quad \sigma_3=\begin{pmatrix} 1&0\\0&-1\end{pmatrix}.\]
Introduce the contours in the lower half plane: $\mathcal C^*:=\{k:\ -k\in \mathcal C\}$ oriented right-to left, and $\mathbb T_j^*:=\{ k: -k\in \mathbb T_j\}$, $j=1,..., N$ oriented counterclockwise. Define the functions (cf. \eqref{Ommeg},\eqref{Blaschke}):
\beq\label{RX}\aligned\mathcal R(k)&:=R(k)P^{-2}(k)Q^{-2}(k),\quad  k\in\mathcal C;\\
X(k)&:=\chi(k)Q^{-2}(k)P^{-2}(k), \quad k\in \Omega_\chi.\endaligned\eeq Note that \beq\label{thesame}X_\pm(k)=\pm \I |X(k)|,\quad k\in [\I c, 0].\eeq Then we have the following
\begin{lemma}\label{leminiRH} For all $(x,t)\in \mathcal D_\varepsilon$ the vector function $m^{\text{ini}}(k)=m^{\text{ini}}(k,x,t)$ is the unique solution of the following RHP:

Find a vector-valued function, holomorphic in the domain  \[\C\setminus \Sini, \quad \Sini:=\mathcal C\cup\mathcal C^*\cup[\I c, -\I c]\cup \cup_j (\mathbb T_j\cup\mathbb T_j^*),\]
satisfying 
\begin{itemize}
\item the symmetry condition $m^{\text{ini}}(-k)=m^{\text{ini}}(k)\sigma_1$, $k \in \C \setminus \Sini $;
\item the jump condition $m^{\text{ini}}_+(k)= m^{\text{ini}}_-(k)v^{\text{ini}}(k)$, $k \in \Sini$, where
\beq\label{jumpini} v^{\text{ini}}(k)=\begin{cases}\begin{array}{ll}\begin{pmatrix} 1&0\\\mathcal R(k)\E^{2t\Phi(k)}& 1\end{pmatrix}, &k\in\mathcal C,\\[4 mm]
\begin{pmatrix} \exp(\frac{\I\ell\pi}{2}) &0\\ \I|X(k)|\E^{2t\Phi(k)}& \exp(\frac{-\I\ell\pi}{2})\end{pmatrix}, &k\in [\I c, \I \rho],\\[4 mm]
\begin{pmatrix} 1&\frac{(k-\I\kappa_j)P^2(k)Q^2(k)}{\I\gamma_j^2\E^{2t\Phi(\I\kappa_j)}}\\0&1\end{pmatrix}, &k\in\mathbb T_j,\\[2 mm]
\exp(\frac{\I\ell\pi}{2}\sigma_3) , & k\in [\I \rho, -\I\rho],\\[2 mm]
\sigma_1 [v^{\text{ini}}(-k)]^{-1}\sigma_1, & k\in [-\I\rho, -\I c],\\[2 mm]
\sigma_1 [v^{\text{ini}}(-k)]\sigma_1, & k\in \mathcal C^*\cup\cup_{j=1}^N\mathbb T_j^*.\end{array}\end{cases}\eeq
\item the normalizing condition $m^{\text{ini}}(k)\to (1,\ 1)$ as $k\to\infty$.
\item at points $\pm\I c$ it has at most fourth root singularities:

\noindent $m^{\text{ini}}(k)=O(k\mp \I c)^{-1/4}$ as $k\to\pm \I c$. 
\end{itemize}
\end{lemma} 
\begin{figure}[h]
\vspace{35pt}
\begin{picture}(8,7)

%%%%% Contour %%%%%%%
\put(4,0.2){\line(0,1){0.8}}
\put(4,3){\line(0,1){0.8}}
\put(4,1){\line(0,1){2}}
\put(0,2.4){\line(1,0){8}}
\put(0,1.6){\line(1,0){8}}

%%%%% Arrows %%%%%%%%%
\put(4,3.25){\vector(0,-1){0.1}}
\put(4,2.05){\vector(0,-1){0.1}}
\put(4,0.75){\vector(0,-1){0.1}}

\put(2.1,2.4){\vector(1,0){0.1}}
\put(2.1,1.6){\vector(-1,0){0.1}}
\put(6.1,2.4){\vector(1,0){0.1}}
\put(6.1,1.6){\vector(-1,0){0.1}}

\put(4,5.4){\vector(1,0){0.1}}
\put(4,7.1){\vector(1,0){0.1}}

\put(4,-3.1){\vector(-1,0){0.1}}
\put(4,-1.4){\vector(-1,0){0.1}}

%%%%%% Symbols %%%%%%%%
\put(4.2,2.5){$\I \rho$}
\put(4.05,1.3){$-\I \rho$}
\put(4,1.6){\circle*{0.1}}
\put(4,2.4){\circle*{0.1}}

\put(4.2,3.7){$\I c$}
\put(4.05,0.2){$-\I c$}
\put(4,0.2){\circle*{0.1}}
\put(4,3.8){\circle*{0.1}}

\put(3.85,5.95){$\kappa_1$}
\put(3.6,-2.1){$-\kappa_1$}
\put(4,-1.8){\circle*{0.1}}
\put(4,5.8){\circle*{0.1}}
\put(4,-1.8){\circle{0.8}}
\put(4,5.8){\circle{0.8}}
\put(3.95,6.5){$\vdots$}

\put(3.85,7.65){$\kappa_N$}
\put(3.6,-3.8){$-\kappa_N$}
\put(4,-3.5){\circle*{0.1}}
\put(4,7.5){\circle*{0.1}}
\put(4,-3.5){\circle{0.8}}
\put(4,7.5){\circle{0.8}}
\put(3.95,-2.8){$\vdots$}

\put(3,6){$\mathbb T_1$}
\put(3,7.7){$\mathbb T_N$}
\put(3,-1.6){$\mathbb T_1^*$}
\put(3,-3.3){$\mathbb T_N^*$}

\put(4,0.2){\circle*{0.1}}
\put(4,3.8){\circle*{0.1}}

\put(2,2.6){$\mathcal C$}
\put(2,1.15){$\mathcal C^*$}

\put(8.2,1.9){$\mathbb R$}

%%%%%% Real Line %%%%%%%
\curvedashes{0.05,0.05}
\curve(0,2, 8,2)
\end{picture}
\vspace{110pt}
\caption{Jump contour $\Sini$}
\end{figure}

\begin{proof} The proof is very similar to that one given in \cite{EGKT}, with only one difference:
we can use the identity $R_-(k)-R_+(k) +\I |\chi(k)|=0$ (Lemma 3.2, \cite{EGKT}) on the interval $[\I\rho, 0]$ and take into account the influence of the function $Q(k)$. In particular, we used that $Q_-(k)Q_+^{-1}(k)=\exp(\frac{\I\ell\pi}{2})$, and $Q_+(k)Q_-(k)=|Q^2(k)|$ for $k\in [\I c, -\I c]$.\end{proof}
Since $m^{\text{ini}}(k)$ is a piecewise holomorphic vector function, we call the problem stated in Lemma \ref{leminiRH} the initial holomorphic (IH) RHP. As already mentioned in the Introduction, the symmetry condition is crucial for uniqueness and plays an essential role in the final asymptotical analysis.  
That is why, all transformations steps carried out  from the initial RHP to an RHP asymptotically close to an exactly solvable model vector RHP, should respect the following {\it symmetry conditions}:  

\begin{hypothesis}\label{remsym}  Each vector RHP  should satisfy:
\begin{itemize}\item The jump contour $\Sigma$ is symmetric with respect to the map $k\mapsto -k$;
\item On $\C\setminus \Sigma$ the vector solution $m(k)$ is holomorphic and satisfies

\noindent $m(-k)=m(k)\sigma_1$;
\item Let $\mathcal L\subset \Sigma$ be a subcontour of $\Sigma$ which does not contain symmetric points. We denote by $\mathcal L^*=\{k:\, -k\in \mathcal L\} \subset \Sigma$ its inversion, if $\mathcal L^*$ has the orientation of the following type: when $k$ moves in the positive direction along $\mathcal L$, then
$-k$ moves in the positive direction along $\mathcal L^*$. In this case, the jump matrix $v(k)$ of the jump problem \beq\label{jc}m_+(k)=m_-(k)v(k),\quad k\in\Sigma,\eeq should satisfy $\det v(k)=1$ and the symmetry
\beq\label{symv1}v(-k)=\si_1 v(k)\si_1,\quad k\in\mathcal L\cup\mathcal L^*.\eeq
If the inversion of $\mathcal L$ has the opposite orientation, we denote it by $(\mathcal L^*)^{-1}$.
For example,  $\mathcal L=[\I c, 0]$ and $(\mathcal L^*)^{-1}=[0, -\I c]$ are both oriented top-bottom.
In this case, 
\beq\label{symv2}v(-k)=\si_1 v(k)^{-1}\si_1, \quad k\in \mathcal L\cup (\mathcal L^*)^{-1};\eeq
\item The vector-function $m(k)$ is continuous up to the boundary, except at the node points of the contour (the ends and self intersections of $\Sigma$, and a finite number of points of discontinuity of the jump matrix), where fourth root singularities are admissible;
\item $m(k)\to (1, 1)$ as $k\to \infty$.
\end{itemize}
\end{hypothesis}

Evidently, the IH RHP formulated in Lemma \ref{leminiRH} satisfies all these requirements.
Alongside with it, we can write down an analogous matrix RHP with the same jump matrix  $v^{\text{ini}}(k)$ given by  \eqref{jumpini}. This can be done in two ways. Either by imposing a symmetry condition on the matrix solution(see \eqref{uniq3} below), or by the standard normalization to the unit matrix $\id$ at infinity.   Simultaneous use of both conditions may seem excessive. In  fact, we observe the following. 

Let $\Sigma\subset\C$ be a union of finitely many smooth curves (finite or infinite) which intersect in at most a finite number of points and all intersections are transversal (this condition can of course be relaxed, but it is sufficient for the applications we have in mind). We will also require $\Sigma$ to be symmetric  with respect to the inversion $k\mapsto -k$.   

Let now $v(k)$ be a piecewise continuous bounded matrix function on  $\Sigma$ satisfying \eqref{symv1} or \eqref{symv2}, with $\det v(k)\equiv1$.  The points of discontinuity of the jump matrix, together with the (finite) set of boundary points $\partial \Sigma$ and the self intersection points of $\Sigma$, are denoted by $\mathcal G$.  We assume that $0\notin \mathcal G$.

Finally, let $\mathcal H$ be the class of $2\times 2$ matrix functions $M(k)$ holomorphic in $\C\setminus\Sigma$, which have continuous limits up to the boundary $\Sigma\setminus\mathcal G$, and have a limit as $k\to\infty$ avoiding $\Sigma$. At  points of $\mathcal G$ we allow  singularities of the form:   
\beq\label{sing}
M(k)=O((k-\kappa)^{-1/4}),\quad \mbox{as}\ \ k\to\kappa\in\mathcal G.
\eeq
Now for an admissible $M\in \mathcal{H}(\Sigma)$ we consider the following RHP
\beq\label{uniq1}
M_+(k)=M_-(k)v(k),\qquad k\in\Sigma,
\eeq
together with the  normalization condition
\beq\label{uniq2}
M(\infty):=\lim_{k\to\infty} M(k)=\id
\eeq
and  the symmetry condition
\beq\label{uniq3}
M(-k)=\sigma_1 M(k)\sigma_1, \qquad k\in\C\setminus\Sigma.
\eeq

\begin{theorem} \label{theor1}
Suppose $\Sigma\subset\C$ is an admissible contour and $v(k)$, $k\in\Sigma$ an admissible matrix as specified above.
Then the following propositions are valid:
\begin{enumerate} [(a)]
\item
If a solution $M\in \mathcal{H}(\Sigma)$ of the jump problem \eqref{uniq1} exists and $\det M(\infty)\neq 0$, then the matrix $M(\infty)^{-1}M(k)$ solves
the RHP \eqref{uniq1}--\eqref{uniq2}, and every other solution of  \eqref{uniq1} is given by $\ti{M}(k) = \ti{M}(\infty) M(\infty)^{-1}M(k)$ in this case.
Moreover,  $\det M(k) = \det M(\infty)$.
\item
If the jump problem \eqref{uniq1} has a nonsingular, that is invertible solution from $\mathcal H(\Sigma)$, then every solution $M\in \mathcal{H}(\Sigma)$ of \eqref{uniq1}
satisfies the symmetry condition \eqref{uniq3} provided $M(\infty)$ satisfies the symmetry condition. In this case $M$ is of the form
\[
M(k)= \begin{pmatrix}\alpha(k) & \beta(k)\\ \beta(-k) & \alpha(-k)\end{pmatrix}, \qquad M(\infty)= \begin{pmatrix} a& b\\b & a\end{pmatrix}
\]
with $\det M(\infty)=a^2-b^2$.  If $M$ is nonsingular then  $a+b\neq 0$.
\item
Suppose \eqref{uniq1} has a nonsingular solution $M$ satisfying \eqref{uniq3}. Then  the vector function
$m(k)$  
\[
m(k)=\frac{1}{a+b} (1,\ 1) M(k) = \frac{1}{a+b} (\alpha(k)+\beta(-k),\, \beta(k)+\alpha(-k)).
\]
solves the same jump problem $m_+(k)=m_-(k) v(k)$ and satisfies  \eqref{eq:symcond} and \eqref{normco}.
 Moreover, in this case $m$ is the unique solution of this problem with admissible singularities of the type \eqref{sing}.
\item
Suppose the vector problem satisfying Hypothesis \ref{remsym} has a solution $m$ which satisfies the condition $m_\pm(0)=(0,0)$.
Then there is no invertible solution of the problem \eqref{uniq1}, \eqref{uniq3} in $\mathcal H (\Sigma)$.
\end{enumerate}
\end{theorem}

\begin{proof}
 {\it (a)}. This follows similarly as in \cite[Theorem~7.18]{deiftbook}. 
 
 \noindent
{\it(b)}. Let $M(k)\in\mathcal H(\Sigma)$ be the solution of the problem \eqref{uniq1}--\eqref{uniq2}. By {\it(a)} it suffices to show that $M$ satisfies \eqref{uniq3}.
To this end set $\tilde M(k)=\sigma_1 M(-k)\sigma_1$. Then $\tilde M(\infty)=\id$ and $\tilde M(k)\in\mathcal H$. Taking into account the symmetry of $\Sigma$, for example, \eqref{symv2}, we see that
\begin{align*}
 \tilde M_+(k)&=\sigma_1 M_-(-k)\sigma_1=\sigma_1 M_+(-k)v^{-1}(-k)\sigma_1\\
&=\sigma_1 M_+(-k)\sigma_1 \sigma_1 v^{-1}(-k)\sigma_1=\tilde M_-(k) v(k).
\end{align*}
Thus $\tilde M(k)$ solves \eqref{uniq1}--\eqref{uniq2} and by uniqueness, $\tilde M(k)\equiv M(k)$. This proves \eqref{uniq3}. The rest is straightforward.

\noindent {\it(c).} By assumption we have a solution $M$ as in {\it(b)} and hence one easily checks that $m$ satisfies \eqref{jc}, as well as \eqref{eq:symcond} and \eqref{normco}. If $\ti{m}$ is a second solution, then as in {\it(a)}
we see that \eqref{jc} implies that $c=\ti{m}(k)M^{-1}(k)$ is a constant vector. Hence by \eqref{normco} we see $c=\frac{1}{a+b} (1, 1)$.

\noindent{\it(d)}. Suppose that there exists an invertible symmetric matrix $M(k)$ satisfying \eqref{uniq1}. Without loss of generality we can assume $M(\infty)=\id$ and
%Note
%\[
%M(k)= \frac{1}{2} \begin{pmatrix} m_1(k) + \ti{m}_1(k) & m_2(k) + \ti{m}_2(k)\\ m_1(k) - \ti{m}_1(k) & m_2(k) - \ti{m}_2(k)\end{pmatrix},
%\]
%where $\ti{m}$ solves (i) together with $\ti{m}(-k)=-\ti{m}(k)\sigma_1$ and $\ti{m}(k)\to (1,\, -1)$ as $k\to\infty$.
hence by {\it(c)} this implies $m_+(0)=(\alpha_+(0) + \beta_-(0),\beta_+(0) + \alpha_-(0))=(0,0)$. Consequently
\[
M_+(0)=\begin{pmatrix} \alpha_+(0) & \beta_+(0)\\ -\alpha_+(0)& -\beta_+(0)\end{pmatrix}
\]
implying $\det M(k) = \det M_+(0)=0$.
\end{proof}

In particular, item {\it(d)} implies that any technique relying on existence of a bounded nonsingular matrix solution is bound to fail at all points in the $(x,t)$-plane where $m_+(0,x,t)=(0,0)$
holds.  Recall now that the vector function \eqref{defm} is the unique solution of the IM RHP, making $\mini$ the unique solution of the IH RHP. After the transformation \eqref{defmini} the point $k=0$ became an inner point of the contour $\Sini$.  Moreover, taking into account the scattering relation \eqref{sct} and the fact $\phi(+0,x,t)=\phi(-0,x,t)=\ol{\phi(+0,x,t)}$, it is straightforward to check that
\[\aligned m^{\text{ini}}_\pm(0,x,t)&=\left(\ol{\phi(\pm 0,x,t)}P^{-1}(0)Q_\pm^{-1}(0),\ \phi(\pm 0,x,t)P(0)Q_\pm(0)\right)\\
&=(-1)^N\phi(0,x,t)\left(\E^{\pm\frac{\I\ell\pi}{4}},\ \E^{\mp\frac{\I\ell\pi}{4}}\right)\endaligned\] 
Thus, if $\phi(0,x^*,t^*)=0$ for  arbitrary large $(x^*, t^*)\in\mathcal D_\varepsilon$, then 
 $m^{\text{ini}}_\pm(0,x^*,t^*)=0$ and by Theorem \ref{theor1}, {\it (d)} we can talk about ill-posedness of the respective matrix RHP.
Moreover, even for the one-soliton (reflectionless, decaying) case this occurs as pointed out in the discussion after Lemma~2.5 in \cite{GT}.

\section{From the IH RHP to the model RHP}\label{sec4}

Now we recall briefly the conjugation and deformation steps which lead to the model problem solution in the domain  $\mathcal D_\varepsilon$.  As is shown in \cite{EGKT} (see also \cite{EGT}), for $\xi=\frac{x}{12 t}\in (-\frac{c^2}{2}, \, \frac{c^2}{3})$ the equality \eqref{defa}
generates an implicitly given positive smooth  function $a(\xi)$, monotonously increasing such that $a(-\frac{c^2}{2})=0$,
 $a(\frac{c^2}{3})=c$.  
  In the domain $\C\setminus [\I c, -\I c]$ we introduce the function
\beq\label{deffung}
{g(k):=g(k,x,t)=12\int_{\I c}^k \left(k^2 +  \xi +\frac{c^2-a^2}{2}\right)\sqrt\frac{k^2 + a^2}{k^2 + c^2}} dk.
\eeq
Here we use the standard branch of the square root with the cut along $\R_-$. 
\begin{lemma}\label{lemg} (\cite{EGKT}).
The function $g$ posseses the following properties
\begin{enumerate}[{\bf (a)}]
\item  $g(k)=-g(-k)$ for $k\in \C\setminus [\I c, - \I c]$;
    \item $g_-(k)+g_+(k)=0$ as $k\in [\I c, \I a]\cup [-\I a, - \I c]$;
\item $g_-(k) - g_+(k)=B$ as $k\in [\I a, - \I a]$, where $ B:=B(\xi)= -2g_+(\I a)>0$;
\item the asymptotical behavior
\[\Phi(k,\xi)-\I g(k,\xi)= O\left(\frac{1}{k}\right).\]
 holds as $k\to\infty$.
\end{enumerate}
\end{lemma}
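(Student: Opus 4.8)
The plan is to read the first three properties of $g$ directly off the branch structure of the radical $R(k):=\sqrt{\frac{k^2+a^2}{k^2+c^2}}$ appearing in \eqref{deffung}, and to derive the two normalization properties (a) and (d) from the defining relation \eqref{defa}. First I would fix the branch: with the cut of the square root along $\R_-$, the radicand $\frac{k^2+a^2}{k^2+c^2}$ is negative precisely on the bands $[\I a,\I c]\cup[-\I c,-\I a]$ and positive elsewhere on $\I\R$, so that $R$ is analytic on $\C\setminus([\I a,\I c]\cup[-\I c,-\I a])$, is even in $k$ (it depends only on $k^2$), tends to $1$ at infinity, and changes sign across each band, $R_+=-R_-$. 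Writing $P(k)=k^2+\xi+\frac{c^2-a^2}{2}$, which is even, the integrand satisfies $g'(k)=12P(k)R(k)$ and is therefore even as well.

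Property (b) is then immediate: since $g(\I c)=0$ and $P(k)R_+(k)=-P(k)R_-(k)$ on each band, integrating from $\I c$ along the two sides of a band gives $g_+=-g_-$, i.e.\ $g_++g_-=0$. For (c) I would use that $R$ extends analytically across the gap $(-\I a,\I a)$, so along the gap no new jump of $g$ is created and the only difference between the two one-sided limits comes from the band $[\I a,\I c]$ already traversed; approaching any gap point from the two sides therefore differs by exactly $g_-(\I a)-g_+(\I a)=-2g_+(\I a)=:B$, independent of the gap point, hence constant. Writing $g_+(\I a)=12\int_{\I c}^{\I a}P(k)R_+(k)\,dk$ as a real integral over the band (there $R_+$ and $dk$ are both purely imaginary, while $P$ is real) shows $B\in\R$, and the sign of $P$ on the band — negative there, because its unique zero on $\I\R_+$ lies inside the gap by \eqref{defa} — gives $B>0$.

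The heart of the matter is (a), from which (d) will follow. Since $g'$ is even, $g(k)+g(-k)$ is locally constant, and as $\C\setminus[\I c,-\I c]$ is connected it equals the single constant $g(-\I c)=12\int_{\I c}^{-\I c}P(k)R(k)\,dk$ (using $g(\I c)=0$). Splitting this path at $\pm\I a$, the two band pieces cancel against one another by the evenness of $PR$ together with the sign flip of $R$ between the upper and lower bands, while the middle gap piece equals $-2\int_0^{\I a}P(k)R_+(k)\,dk$, which vanishes by \eqref{defa}. Hence $g(-\I c)=0$, the constant is zero, and $g$ is odd. Finally, oddness forces the expansion of $g$ at infinity to contain only odd powers; matching the two leading coefficients, where the constant $\frac{c^2-a^2}{2}$ is precisely what cancels the $\frac{a^2-c^2}{2k^2}$ contribution of $R$ and $\xi=\frac{x}{12t}$ identifies the linear term with that of $\Phi$ in \eqref{phase}, yields $g(k)=4k^3+12\xi k+O(k^{-1})$ and therefore $\Phi(k)-\I g(k)=O(k^{-1})$.

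I expect the branch bookkeeping to be the main obstacle. One must check carefully that the two one-sided limits of $R$ across the gap genuinely coincide, so that no spurious jump is introduced on $(-\I a,\I a)$, while they differ by a sign across the bands, and that under $k\mapsto-k$ the relevant side of the upper band maps to the appropriately signed side of the lower band. Getting these orientations and signs mutually consistent is exactly what makes the cancellation of the band integrals in (a) and the reality and positivity of $B$ in (c) come out correctly; the remaining estimates are routine.
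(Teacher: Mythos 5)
The paper offers no proof of this lemma --- it is quoted from \cite{EGKT} --- so there is nothing internal to compare against; judged on its own, your verification is correct and is the natural direct argument: read (b), (c) off the sign flip $R_+=-R_-$ of $R(k)=\sqrt{(k^2+a^2)/(k^2+c^2)}$ across the bands, get (a) from the evenness of $g'=12PR$ plus the vanishing of the gap integral guaranteed by \eqref{defa}, and get (d) from oddness plus the expansion of $PR$ at infinity. One point of logic is looser than you present it: property (b) on the \emph{lower} band $[-\I a,-\I c]$ is not ``immediate'' from the sign flip alone, because the two one-sided paths from $\I c$ to a point of that band also differ along the upper band and both traverse the gap; in the sum $g_++g_-$ the band contributions cancel but the gap contributes $24\int_{\I a}^{-\I a}PR\,dk=-48\int_0^{\I a}PR\,dk$, which vanishes only by \eqref{defa} --- i.e.\ the same cancellation you invoke for (a). So either prove (a) first and deduce (b) on the lower band via $k\mapsto-k$, or state explicitly that \eqref{defa} enters there as well. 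The sign bookkeeping you defer does close: with the principal branch and the top-down orientation (so the $+$ side is $\re k>0$), one finds $R_+(\I s)=\I\sqrt{(s^2-a^2)/(c^2-s^2)}$ for $s\in(a,c)$, hence $g_+(\I a)=12\int_a^c P(\I s)\sqrt{(s^2-a^2)/(c^2-s^2)}\,ds<0$ since $P(\I s)<0$ on the band (its unique zero on $\I\R_+$ lies in the gap by the sign-change argument from \eqref{defa}), which gives $B=-2g_+(\I a)>0$ as required.
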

The signature table for the imaginary part of function $g$ is shown in the following figure:
\begin{figure}[h]
\begin{picture}(8,4)
\put(0,2){\vector(1,0){8}}
\put(4,0){\line(0,1){1}}
\put(4,3){\line(0,1){1}}

\put(0.5,2.3){$+$}
\put(0.5,1.6){$-$}
\put(7.5,2.3){$+$}
\put(7.5,1.6){$-$}
\put(3.6,3.9){$-$}
\put(3.6,0){$+$}

\put(4.2,2.9){$\I a$}
\put(4.2,1){$-\I a$}
\put(4,1){\circle*{0.1}}
\put(4,3){\circle*{0.1}}

\put(4.2,3.9){$\I c$}
\put(4.2,0){$-\I c$}
\put(4,0){\circle*{0.1}}
\put(4,4){\circle*{0.1}}

\curve(4,3, 5.5,2.5, 7.5,4)
\curve(4,1, 5.5,1.5, 7.5,0)
\curve(4,3, 2.5,2.5, 0.5,4)
\curve(4,1, 2.5,1.5, 0.5,0)

\curvedashes{0.05,0.05}
\curve(4,1, 4,3)

\end{picture}
  \caption{Sign of $\im(g)$}\label{fig1}
\end{figure}

STEP 1. Let $m^{\text{ini}}(k)$ be the unique vector solution of the IH RHP. Redefine it by \beq\label{firsttr}m^{(1)}(k):=m^{\text{ini}}(k)\E^{(\I tg(k) - t\Phi(k))\sigma_3}.\eeq
Then $m^{(1)}(k)$ is a piecewise-holomorphic function  in $\C$ which satisfies the symmetry requirements of Hypothesis \ref{remsym} and solves the jump problem $m^{(1)}_+(k)= m^{(1)}_- (k)v^{(1)}(k)$ with
\[
v^{(1)}(k) =  \begin{cases}\begin{array}{ll}\begin{pmatrix} 1&0\\ \mathcal R(k)\E^{2\I
tg(k)}& 1\end{pmatrix}, &k\in\mathcal C,\\[4 mm]
\begin{pmatrix} \E^{\I t (g_+ -g_-) +\frac{\I \ell\pi}{2}} &0\\ \I |X(k)|\E^{\I t(g_+ +g_-)}& \E^{-\I t (g_+ -g_-) -\frac{\I\ell\pi}{2}}\end{pmatrix}, &k\in [\I c, \I \rho],\\[4 mm]
\begin{pmatrix} 1& h_j(k,t)\\0&1\end{pmatrix}, &k\in\mathbb T_j,\ j=1,..,N,\\[2 mm]
\exp((-\I tB+\I\frac{\ell\pi}{2})\sigma_3) , & k\in [\I \rho, -\I\rho],\\[2 mm]
\sigma_1 [v^{(1)}(-k)]^{-1}\sigma_1, & k\in [-\I\rho, -\I c],\\[2 mm]
\sigma_1 [v^{(1)}(-k)]\sigma_1, & k\in \mathcal C^*\cup\cup_j\mathbb T_j^*,\end{array}\end{cases}
\]
where $\mathcal R(k)$ and $X(k)$ are given by \eqref{RX} and 
\[h_j(k,t):=h_j(k,t,\xi)=-\I(k-\I\kappa_j)P^2(k)Q^2(k)\gamma_j^{-2}\E^{-2t\left(\Phi(\I\kappa_j)-\Phi(k)\right)-2t\I g(k)}.\]
Since $\im g(\I \kappa_j)< -\delta <0$ (cf. Figure \ref{fig1}), uniformly with respect to $\xi\in \mathcal I_\varepsilon$, we conclude that there exists $\rho_1>0$ such that
\beq\label{rho1}
\max_{j=1,..,N}\sup_{ |k-\kappa_j|\leq \rho_1}\left(|\Phi(\I \kappa_j)-\Phi(k)|+\im g(k)\right)< -C(\varepsilon)<0.\eeq
Taking into account \eqref{defrhou}, we prove
\begin{lemma}\label{refin}
The following estimate is valid uniformly with respect to $\xi\in \mathcal I_\varepsilon$ \footnote{by $C(\varepsilon)$ we will denote any positive constant with respect to $k$, $\xi$, $x$ and $t$}
\[\max_j\sup_{k\in \mathbb T_j}|h_j(k,t)|= O(\E^{-C(\varepsilon)t}).\]
\end{lemma}
Put now $b:=a-\rho$. Recall that the smoothness of the initial data \eqref{decay} up to the 7-th derivative implies that $R(k)$ for $k \in \R$ is a smooth function with $R(k)=O(k^{-6})$ as $k\to\pm\infty$ (\cite{EGLT}, Theorem 4.1). From item {\bf (c)} of Lemma \ref{lemg}, \eqref{defrhou} and the signature table of $g(k)$ we conclude that the following proposition is valid:
\begin{lemma}\label{lilem} Uniformly with respect to $\xi\in\mathcal I_\varepsilon$
\[\|v^{(1)}(k) -\id\|_{L^{\infty}(\mathcal C)} + \|v^{(1)}(k) -\id\|_{L^1(\mathcal C)}=O(\E^{-C(\varepsilon)t}),\]
\[\| v^{(1)}(k) -\E^{\left(-\I t B +\I\frac{\pi\ell}{2}\right)\sigma_3}\|_{L^{\infty}([0,\I b])}=O(\E^{-C(\varepsilon)t}).\]
\end{lemma}

STEP 2. 
Our next  conjugation step deals with a factorization of the jump matrix on the set $[\I c, \I a]\cup [-\I a, -\I c]$. To this end consider the following function $F(k)=F(k,\xi)$, $k\in\C\setminus [\I c, -\I c]$:
\beq\label{forF}
F(k):=\exp\left\{\frac{w(k)}{2\pi\I}\left(\int_{\I c}^{\I a}\frac{f(s)}{s-k}ds +\int_{-\I c}^{-\I a}\frac{f(s)}{s-k}ds -\I\Delta_F \int_{-\I a}^{\I a}\frac{ds}{w(s)(s-k)}\right)\right\},
\eeq
where 
\[
w(k)=\sqrt{(k^2+c^2)(k^2 + a^2)},\quad k\in \C\setminus ([\I c, \I a]\cup [-\I a, -\I c]),\quad w(0)>0,
\]
\beq\label{deff}f(k):=\frac{\log |X(k)|}{w_+(k)}, \quad k\in [\I c, -\I c],
\eeq
and 

\beq \label{defdelta}
\Delta_F=\Delta_F(\xi):=2\I\int_{\I a}^{\I c}f(s)ds\left(\int_{-\I a}^{\I a}\frac{ds}{w(s)}\right)^{-1}\in\R.\ \eeq
\begin{remark} Putting together \eqref{deftr}, \eqref{RX} and \eqref{Blaschke} we conclude that $\Delta= \Delta(\xi)$ given by \eqref{deltaini} and $\Delta_F$ given by \eqref{defdelta}, \eqref{deff} are connected by 
\beq\label{svyazF} \Delta=\Delta_F -\frac{\ell\pi}{2}.\eeq
\end{remark}
Next, since $X(k)=\chi(k)Q^{-2}(k)P^{-2}(k)$ has bounded non-vanishing values at points $\pm \I c$, we get  
\begin{lemma}[\cite{EGKT},\cite{MUSK}]\label{lemF} The function $F(k)$ possesses the following properties:
\begin{enumerate} [ (1) ]
\item  $F(-k)=F^{-1}(k)$ for $k\in\C\setminus [\I c, -\I c]$;
\item $F_+(k)F_-(k)=|X(k)|$ for $k\in[\I c,\I a]$;
\item $F_+(k)=F_-(k)\E^{\I\Delta_F} $ for $k\in[\I a, - \I a]$;
\item $F(k)\to 1$ as $k\to \infty$; 
\item $F_+(k)F_-(k)=\left(F_+(-k)F_-(-k)\right)^{-1}$ for $k\in [-\I a, -\I c]$;
\item $F(k)$ has finite limits as $k\to\pm\I c$.
\end{enumerate}
\end{lemma}
Taking into account these properties and property \eqref{cont} we observe that the matrix $v^{(1)}(k)$ can be factorized on $[\I c, \I a]$ as follows:
\[
v^{(1)}(k)=  G_-(k)\begin{pmatrix}0&\I\\ \I & 0\end{pmatrix} G_+(k)^{-1},\]
where (cf. \eqref{Ommeg}, \eqref{RX}, \eqref{Blaschke}, \eqref{defchi1}):
\[ G(k)= \begin{pmatrix} F^{-1}(k) & -\frac{ F(k)   \E^{-2\I t g(k)}}{X(k)}\\[2mm] 0 & F(k) \end{pmatrix},\quad k\in \Omega_\chi.
\]
Inside the domain $\Omega_\chi$ introduce the subdomain $\Omega_1$ surrounded by the contour $\Sigma_1$ oriented  as depicted
 in Figure~\ref{fig3}. Denote by $\Sigma_1^*$ its inversion in $\C^-$.
Define $m^{(2)}(k)$ as
\[
m^{(2)}(k):=m^{(1)}(k) \left\{\begin{array}{ll} G(k), & k\in \Omega_{1},\\
(F(k))^{-\sigma_3},  & k\in \ol{\C^+}\setminus\ol{\Omega_{1}},\\
m^{(2)}(-k)\si_1, &k\in \C^-.\end{array}\right.
\]
Since $F(k)\to 1$ as $k\to \infty$, the normalization condition is preserved for $m^{(2)}(k)$. The correctness of its definition by symmetry in the lower half plane is due to properties of {\it (1), (2), (5)} of Lemma \ref{lemF}. Moreover, due to property {\it(6)}, \eqref{firsttr} and Lemma \ref{leminiRH}, we have
\begin{align*}
m^{(2)}(k)=O(k\mp \I c)^{-1/4}, \ \mbox{as}\ &k\to\pm \I c; \quad m^{(2)}(k)=O(1), \ \mbox{as}\ k\to\pm \I a;
\\
m^{(2)}(k)&=O(1), \ \mbox{as}\ k\to\pm \I \rho.
\end{align*}
Note that the set $\mathcal G^{(2)}=\{\pm\I c, \pm\I a, \pm\I \rho\}$ is the set of all node points of the RHP for $m^{(2)}(k)$.
Taking into account property ${\bf (c)}$ of Lemma \ref{lemg}, property $(3)$ of Lemma \ref{lemF} and \eqref{svyazF},  we see that
 \beq\label{propimp4} \frac{F_-(k)}{F_+(k)}\E^{\I t(g_+(k) - g_-(k)+\I\ell\pi/2) }= \E^{-\I t  B-\I \Delta}, \quad k\in [\I a, -\I a],\eeq
and therefore the jump matrix for $m^{(2)}(k)$ looks as follows \beq\label{jumpcond25}
v^{(2)}(k) = \left\{ \begin{array}{ll}
\begin{pmatrix}
0 & \I \\
\I  & 0
\end{pmatrix},& k\in [\I c, \I a],\\ [3mm]
\begin{pmatrix} \frac{F_-}{F_+}\E^{\I t(g_+ - g_-) +\I\ell\pi/2 }& 0\\ 
\frac{\I |X|}{F_+F_-} \E^{\I t(g_+ + g_-)) } & \frac{F_+)}{F_-}\E^{\I t(g_- - g_+)-\I\ell\pi/2 }\end{pmatrix} & k\in [\I a, \I b]\\ [5mm]
\begin{pmatrix}\E^{-\I t  B-\I \Delta}& 0\\
0&\E^{\I t  B+\I\Delta}\end{pmatrix} +\mathcal A(k,t),&  k\in [\I b, 0],\\ [3mm]
\begin{pmatrix} 1 & -\frac{F^2(k)}{X(k)}\E^{-2\I t g(k)}\\ 0 & 1\end{pmatrix}, & k\in\Sigma_1,\\ [3 mm]
\begin{pmatrix} 1 & 0\\ \mathcal R(k)F^{-2}(k)\E^{2\I t g(k)} & 1\end{pmatrix}, & k\in \mathcal C,\\ [3mm]
\begin{pmatrix} 1& F^2(k)h_j(k,t)\\0&1\end{pmatrix}, &k\in\mathbb T_j,\ j=1,..,N,\\[2 mm]
\sigma_1 (v^{(2)}(-k))^{-1}\sigma_1, & k\in  [0, -\I c]\\
\sigma_1 v^{(2)}(-k)\sigma_1, & k\in \Sigma_1^*\cup\mathcal C^*\cup \cup_j\mathbb T_j^*,
\end{array}\right.
\eeq
where the matrix \[ \mathcal A(k,t)=[F_-(k)]^{\si_3}\left( v^{(1)}(k)-\E^{(-\I t B+\I\frac{\ell\pi}{2})\si_3}\right)[F_-(k)]^{-\si_3}\] is supported on $[\I b, \I\rho]$ and
admits, according to Lemma \ref{lilem}, the estimate
\[
\|\mathcal A(k,t)\|_{L^{\infty}([0, \I b])}=O(\E^{-C(\varepsilon)t}).\]
\begin{figure}[H]
\begin{picture}(8,7)

%%%%% Contour %%%%%%%
\put(4,5.5){\line(0,-1){3.8}}
\put(0,2.7){\line(1,0){8}}
%\put(0,1.3){\line(1,0){8}}

\curve(4,4.6, 3.2,5.7, 4,6.3, 4.8,5.7, 4,4.6)
%\curve(4,-0.6, 3.5,-1.5, 4,-2.1, 4.5,-1.5, 4,-0.6)

%%%%% Arrows %%%%%%%%%

\put(4,3.15){\vector(0,-1){0.1}}
\put(4,2.15){\vector(0,-1){0.1}}
%\put(4,0.85){\vector(0,-1){0.1}}

\put(4,4.15){\circle*{0.1}}

%\put(4,4.15){\vector(0,-1){0.1}}
%\put(4,-0.05){\vector(0,-1){0.1}}
\put(4,5.1){\vector(0,-1){0.1}}
%\put(4,-1.1){\vector(0,-1){0.1}}

\put(2.1,2.7){\vector(1,0){0.1}}
%\put(2.1,1.3){\vector(-1,0){0.1}}
\put(6.1,2.7){\vector(1,0){0.1}}
%\put(6.1,1.3){\vector(-1,0){0.1}}

%\put(4,8.7){\vector(1,0){0.1}}
%\put(4,7){\vector(1,0){0.1}}

\put(4.5,5.1){\vector(1,1){0.1}}

%%%%%% Symbols %%%%%%%%
\put(4.2,2.4){$\I \rho$}
%\put(4.05,1.4){$-\I \rho$}
%\put(4,1.3){\circle*{0.1}}

\put(4,2.7){\circle*{0.1}}

\put(4.1,5.5){$\I c$}
%\put(3.8,-1.8){$-\I c$}
%\put(4,-1.5){\circle*{0.1}}
\put(4,5.5){\circle*{0.1}}

\put(4.2,4.05){$\I b$}
%\put(4.05,0.2){$-\I b$}
%\put(4,0.4){\circle*{0.1}}
%\put(4,3.6){\circle*{0.1}}

\put(4.2,4.5){$\I a$}
%\put(4.05,-0.8){$-\I a$}
%\put(4,-0.6){\circle*{0.1}}
\put(4,4.6){\circle*{0.1}}

%\put(3.85,7.65){$\kappa_N$}
%\put(3.6,-1.1){$-\kappa_N$}
%\put(4,-0.8){\circle*{0.1}}
%\put(4,7.5){\circle*{0.1}}
%\put(4,-0.8){\circle{1}}
%\put(4,7.5){\circle{1}}
%\put(3.95,8.2){$\vdots$}

%\put(3.85,9.35){$\kappa_1$}
%\put(3.6,-2.8){$-\kappa_1$}
%\put(4,-2.5){\circle*{0.1}}
\put(4,1.7){\circle*{0.1}}
%\put(4,-2.5){\circle{1}}
%\put(4,9.2){\circle{1}}
%\put(3.95,-1.8){$\vdots$}

%\put(3,7.8){$\mathbb T_N$}
%\put(3,9.5){$\mathbb T_1$}

\put(8.2,1.6){$\mathbb R$}
\put (4, 1.3){$0$}
\put(3.5,5.7){$\Omega_1$}

\put(3.15,4.8){$\Sigma_1$}

\curvedashes{0.05,0.05}

%%%%%% Real Line %%%%%%%
\curve(0,1.7, 8,1.7)

%%%%%% \Omega_\chi %%%%%%
\curve(3,1.7, 3,6.5)
\curve(3,6.5, 5,6.5)
\curve(5,1.7, 5,6.5)
\put(3.3,3.4){$\Omega_\chi$}

%%%%% \Omega_R %%%%%%
\put(2,2.9){$\mathcal C$}
%\put(2,0.85){$\mathcal C^*$}
\put(2,2.05){$\Omega_R$}
\end{picture}
\vspace{-40pt}
\caption{Jump contour $\Sigma^{(2)}$ in $\C^+$ (without the $\mathbb T_j$'s)}\label{fig3}
\end{figure}
Lemma \ref{lilem} and Lemma \ref{refin} together with properties {\it (1)} and {\it (4)} of Lemma \ref{lemF} also imply
\begin{lemma}\label{estid} Uniformly with respect to $\xi\in \mathcal I_\varepsilon$
\[\|v^{(2)}(k) -\id\|_{L^{\infty}(\mathcal K)} + \|v^{(2)}(k) -\id\|_{L^1(\mathcal K)}=O(\E^{-C(\varepsilon)t}),\ \mbox{as}\ \ t\to\infty,\]
where
$\mathcal K=\mathcal C\cup\mathcal C^*\cup\cup_j(\mathbb T_j\cup\mathbb T_j^*).$\end{lemma}
\begin{remark}\label{remmod} Formula  \eqref{propimp4} 
allows us to shorten the expression for $v^{(2)}(k)$ on the interval  $[\I a, \I b].$ However, we use the form \eqref{jumpcond25} of the jump matrix on $[\I a, \I b]\cup [-\I b, -\I a]$,  because it simplifies further considerations of the  local parametrix problem. \end{remark}

Let $\mathcal B$ be a vicinity of point $\I a$ with the boundary $\pa\mathcal B$ satisfying
\[ \frac{\rho}{2}<\mbox{dist}\,( \pa\mathcal B, \I a)<2\rho,\]
where $\rho$ is defined by \eqref{defrhou}. Its precise shape will be described later in Section~\ref{sec7}. Without loss generality one can assume that $\I b\in\pa\mathcal B$. Denote $\mathcal B^*=\{k:\  -k\in\mathcal B\}$ and the jump contour for  $m^{(2)}(k)$ by \beq\label{hatsi}
\Sigma^{(2)}:=\mathcal C\cup\mathcal C^*\cup \Sigma_1\cup \Sigma_1^*\cup \cup_j(\mathbb T_j\cup\mathbb T_j^*)\cup [\I c,  -\I c],\eeq
and let \beq\label{estmoro}\Sigma_\rho=\Sigma^{(2)}\setminus (\mathcal B\cup \mathcal B^*)\eeq be the part of our contour outside the small vicinities of the points $\pm\I a$.
 Put
 \beq\label{vm}
v^{\text{mod}}(k) = \left\{ \begin{array}{ll}
\I\sigma_1,& k\in [\I c, \I a], \,\\
\E^{-\I \Lambda\sigma_3},& k\in [\I a, 0],\\
\si_1(v^{\text{mod}}(-k))^{-1}\si_1, & k\in [0, -\I c],\\
\id, & k\in\Sigma^{(2)}\setminus [\I c, -\I c],
\end{array}\right.
\eeq
where
\beq\label{La} 
\Lambda:=t B +\Delta\in\R.
\eeq
The consideration above shows that
 uniformly with respect to $\xi\in\mathcal I_\varepsilon$
 \beq\label{estmoem}
 \| v^{(2)}(k) - v^{\text{mod}}(k)\|_{L^\infty(\Sigma_\rho)\cap L^1(\Sigma_\rho)}=O(\E^{-C(\varepsilon) t}),\quad t\to\infty.\eeq
 
 The matrix $v^{\text{mod}}(k)$ is piecewise constant with respect to $k$. In the next section we study briefly  the respective vector RHP. It was solved in \cite{EGKT}, \cite{EGT}, however the uniqueness was not established there. Moreover, using the trace formula we propose here  a shorter and more transparent way to compute the expansion of $m_1^{\text{mod}}(k)m_2^{\text{mod}}(k)$ as $k\to\infty$, which will approximate the analogous expansion for the initial RHP, because of 
 \beq\label{initrue} m^{(2)}_1(k) m^{(2)}_2(k)= m^{\text{ini}}_1(k) m^{\text{ini}}_2(k), \quad |\im k|>\kappa_N +\rho.\eeq

\section{Unique solution for the vector model RHP}
\label{sec:mp}

\begin{lemma}\label{unique} The following RHP has a unique solution:

\noindent find a vector-valued  function $m^{\text{mod}}(k)=(m_1^{\text{mod}}(k)\ m_2^{\text{mod}}(k))$ holomorphic in the domain $\C\setminus [\I c,  -\I c]$, which is continuous up to the boundary except at points of the set $\mathcal G^{\text{mod}}:=\{ \I c, \I a, -\I a, -\I c\}$, and satisfies the jump condition:
\beq\label{defmvecmod}
m_+^{\text{mod}}(k)= m_-^{\text{mod}}(k) v^{\text{mod}}(k),
\eeq

\beq\label{jumpcondmod}
v^{\text{mod}}(k) = \left\{ \begin{array}{ll}
\begin{pmatrix}
0 & \I \\
\I  & 0
\end{pmatrix},& k\in [\I c, \I a], \,\\
\begin{pmatrix}
0 & -\I \\
-\I & 0
\end{pmatrix},& k\in [-\I a, -\I c],\\
\begin{pmatrix}\E^{-\I \Lambda}& 0\\
0&\E^{\I \Lambda}\end{pmatrix},& k\in [\I a, -\I a],\\
\end{array}\right.
\eeq
the symmetry condition
\beq \label{symcond}
m^{\text{mod}}(-k) = m^{\text{mod}}(k) \sigI,
\eeq
and the normalization condition
\beq\label{normcond}
\lim_{k\to\I\infty} m^{\text{mod}}(k)= (1\ \ 1).\eeq
At any point $\kappa\in\mathcal G^{\text{mod}}$ the vector function $m^{\text{mod}}(k)$ can have at most a fourth root singularity:  $m^{\text{mod}}(k)= O((k-\kappa)^{-1/4}))$, $k\to \kappa$.

\end{lemma}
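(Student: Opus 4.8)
The statement comprises existence and uniqueness, but existence is already settled: an explicit solution of \eqref{defmvecmod}--\eqref{normcond} in the admissible singularity class is the one constructed in \cite{EGKT}, \cite{EGT}. So the plan is to prove uniqueness, and the argument will mirror the energy-type computation from the proof of Theorem~\ref{thm:vecrhp}, with the simplification that the model contour carries no jump on $\R$. If $m^{\text{mod}}$ and $\ti m^{\text{mod}}$ are two solutions, I would set $\mu:=m^{\text{mod}}-\ti m^{\text{mod}}$. Then $\mu$ is holomorphic in $\C\setminus[\I c,-\I c]$, satisfies the same jump \eqref{defmvecmod} and symmetry \eqref{symcond}, carries only the admissible $(k-\kappa)^{-1/4}$ singularities at $\kappa\in\mathcal G$, and now $\mu(k)\to(0,0)$ at infinity. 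Since the cut is bounded, $\mu$ is holomorphic near $\infty$ and vanishes there, so its Laurent expansion gives $\mu(k)=O(k^{-1})$.

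Next I would introduce the scalar function
\[ F(k)=\mu_1(k)\,\ol{\mu_1(\ol k)}+\mu_2(k)\,\ol{\mu_2(\ol k)}, \]
which is holomorphic off the cut $[\I c,-\I c]$ (it is a Schwarz reflection, and the cut is symmetric under conjugation), satisfies $F(k)=O(k^{-2})$ as $k\to\infty$, and—because $\mu=O((k-\kappa)^{-1/4})$—has at worst integrable singularities $F(k)=O((k-\kappa)^{-1/2})$ at the points of $\mathcal G$. The decisive point is that on $\R$, where the model problem has no jump and $\ol k=k$, one has $F_+(k)=|\mu_1(k)|^2+|\mu_2(k)|^2\ge 0$.

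Then I would compute the boundary values of $F$ on the slit via the symmetry \eqref{symcond}: for $k=\I s\in\I\R$ one has $\ol k=-k$, so \eqref{symcond} turns $\ol{\mu_1(\ol k)},\ol{\mu_2(\ol k)}$ into $\ol{\mu_2},\ol{\mu_1}$ on the opposite side, giving
\[ F_r=\mu_{1,r}\ol{\mu_{2,l}}+\mu_{2,r}\ol{\mu_{1,l}},\qquad F_l=\mu_{1,l}\ol{\mu_{2,r}}+\mu_{2,l}\ol{\mu_{1,r}}, \]
exactly as in Theorem~\ref{thm:vecrhp} (here $r,l$ denote the $\re k>0$ and $\re k<0$ sides; for the top-down orientation the $+$ boundary value is the one from the $r$ side). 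Inserting the explicit jumps, the two parts of the cut behave very differently. On $[\I a,-\I a]$ the jump is the unimodular diagonal matrix $\E^{-\I\Lambda\sigma_3}$, and since $\Lambda=tB+\Delta\in\R$ by \eqref{La}, a short computation gives $F_l=F_r$, so this part drops out. On $[\I c,\I a]$ the anti-diagonal jump yields $F_l=-\I(|\mu_{1,l}|^2+|\mu_{2,l}|^2)$ and $F_r=+\I(|\mu_{1,l}|^2+|\mu_{2,l}|^2)$, hence $F_l-F_r=-2\I(|\mu_{1,l}|^2+|\mu_{2,l}|^2)$. Applying Cauchy's theorem to $F$ on the boundary $\mathcal C_\rho$ of $(\C^+\cap\{|k|<\rho\})\setminus[\I c,0]$, oriented counterclockwise, the upper semicircle drops out because $F=O(k^{-2})$, the integrable singularities at $\I a,\I c$ make the small arcs negligible, and in the limit $\rho\to\infty$ one is left with
\[ \int_\R\big(|\mu_1|^2+|\mu_2|^2\big)\,dk+2\int_a^c\big(|\mu_{1,l}(\I s)|^2+|\mu_{2,l}(\I s)|^2\big)\,ds=0. \]
Both summands are non-negative, so each vanishes; in particular $\mu_1=\mu_2=0$ on $\R$. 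Since $\mu$ is holomorphic on the connected set $\C\setminus[\I c,-\I c]$, which contains $\R\setminus\{0\}$, vanishing on a set with an accumulation point forces $\mu\equiv0$, which proves uniqueness.

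I expect the main obstacle to be the sign bookkeeping on the slit: one must fix the orientation convention consistently in both the jump relations and the traversal of the two faces of the cut in $\mathcal C_\rho$, since the opposite choice flips the sign of the slit contribution and destroys positivity. The structural inputs that make positivity work are precisely $\Lambda\in\R$ (which kills the diagonal part), the anti-diagonal jump on $[\I c,\I a]$ (which produces a definite-sign term), and the absence of any jump on $\R$; once these are in place the conclusion is immediate.
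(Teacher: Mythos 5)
Your argument is correct, and the details check out: with the top--down orientation the $+$ side of the cut is $\re k>0$, the jump $\I\sigma_1$ on $[\I c,\I a]$ gives $F_r=\I\bigl(|\mu_{1,l}|^2+|\mu_{2,l}|^2\bigr)=-F_l$, the reality of $\Lambda$ makes the gap $[\I a,-\I a]$ jump-free for $F$, and the $O((k-\kappa)^{-1/2})$ singularities of $F$ are integrable, so the indentation arcs at $\I a,\I c$ drop out of Cauchy's theorem. The paper's proof of Lemma~\ref{unique} starts from the very same auxiliary function $f(k)=\ti m_1(k)\ol{\ti m_1(\ol k)}+\ti m_2(k)\ol{\ti m_2(\ol k)}$ and the same boundary-value computations, but finishes differently: it notes that the symmetry forces the $1/k$ coefficient of $\mu$ to be $(1,-1)\ti h$, so $f=2|\ti h|^2/k^2+O(k^{-4})$, and then solves the resulting scalar problem ($f_+=-f_-$ on the bands, no jump on the gap, at worst $(k-\kappa)^{-1/2}$ singularities) in closed form via Liouville applied to $f(k)\sqrt{(k^2+c^2)(k^2+a^2)}$, obtaining $f(k)=-2|\ti h|^2/\sqrt{(k^2+c^2)(k^2+a^2)}$; evaluating at $k=0$, where $f(0)\ge 0$ because $f=|\ti m_1|^2+|\ti m_2|^2$ along $\R$, forces $\ti h=0$, hence $f\equiv0$ and then $\ti m\equiv0$. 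Your contour-integral route is precisely the one the paper uses for the uniqueness part of Theorem~\ref{thm:vecrhp}; it is marginally more robust (you need only $F=O(k^{-2})$ and $F\ge0$ on $\R$, not the explicit scalar solution or the structure of the $1/k$ coefficient) and it yields $\mu\equiv0$ on $\R$ directly, so the identity theorem finishes at once. The paper's route localizes the contradiction to the single point $k=0$ and makes transparent why the admissible class $O((k-\kappa)^{-1/4})$ is the right one, since any stronger singularity would break the Liouville step. Both arguments rest on the structural inputs you identified: $\Lambda\in\R$, the anti-diagonal jump on the bands, and the absence of a jump on $\R$.
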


\begin{proof} We prove the uniqueness. 
Let $m$ and $\hat m$ be two solutions of the RH problem \eqref{defmvecmod}--\eqref{normcond}. Their difference $\tilde m=m-\hat m$ is a holomorphic vector in $\C\setminus [\I c, -\I c]$ which satisfies conditions \eqref{jumpcondmod} and \eqref{symcond} and has the following behavior \[\tilde m(k)= (1\ \ -1)\frac{\ti h}{ k}(1 +o(1))\ \mbox{as}\ \ k\to \I\infty.\] Moreover,
$\tilde m(k)=O((k-\kappa)^{-1/4}))$ as $k\to\kappa$ for $\kappa\in\mathcal G^{\text{mod}}$.

In $\C\setminus [\I c, -\I c]$, introduce a holomorphic function \beq\label{defF}f(k):= \ti m_1(k)\ol {\ti m_1(\ol k)} + \ti m_2(k)\ol {\ti m_2(\ol k)}.\eeq 

Due to \eqref{symcond} this function is  even : $f(-k)=f(k)$
and satisfies
\beq\label{inf} f(k)=\frac{2|\ti h|^2}{k^2}(1+O(k^{-2})),\quad \text {as}\ \ k\to \I\infty;\eeq 
 
\beq\label{near} f(k)=O((k-\kappa)^{-1/2}))\quad \text {as }\ \ k\to\kappa, \ \ \text  {for }\ \ \kappa\in \mathcal G^{\text{mod}}.\eeq   Since $-\ol k=k$ for $k\in\I \R$ and taking into account \eqref{symcond}, for $k\in [\I c, -\I c]$  we get 
\[ \aligned f_+(k)&=\ti m_{1,+}(k)\ol{\ti m_{2,-}(k)}+ \ti m_{2,+}(k)\ol{\ti m_{1,-}(k)},\\
f_-(k)&=\ti m_{1,-}(k)\ol{\ti m_{2,+}(k)}+ \ti m_{2,-}(k)\ol{\ti m_{1,+}(k)},\endaligned\ \quad\ \ \  k\in [\I c, -\I c].\]
By use of \eqref{jumpcondmod}
\[  f_+(k)=\pm \I \left(\left| \ti m_{2,-}(k)\right|^2 +\left| \ti m_{1,-}(k)\right|^2\right)=-  f_-(k)\in \I\R,\quad k\in [\pm \I c, \pm \I a],
\]
\beq  \label{prop}f_+(k)=\E^{-\I\Lambda}  \ti m_{1,-}(k)\ol{\ti m_{2,-}(k)} +\E^{\I\Lambda} \ti m_{2,-}(k)\ol{\ti m_{1,-}(k)}=f_-(k)\in \R,\quad k\in [\I a, -\I a].\eeq
 Thus the function $f(k)$ has no jump on $[\I a, -\I a]$ and is the solution of the following jump problem
 \[ f_+(k)=-f_-(k),\quad k\in [\I c, \I a]\cup [-\I a, -\I c],\] which satisfies \eqref{inf} and \eqref{near}. The unique solution of this problem is given by the formula
\[f(k)=-\frac{2|\ti h|^2}{\sqrt{(k^2 +c^2)(k^2 + a^2)}}.\]
 Therefore, if $\ti h\neq 0$ then $f(0)<0$. But  according to \eqref{defF} and \eqref{prop} we have $f_+(0)=f_-(0)\geq0$. Thus, $\ti h=0$ and hence
\[
\ti m_{1,-}(k)=\ti m_{1,+}(k)=\ti m_{2,+}(k)=\ti m_{2,-}(k)=0,\quad k\in [\I c, \I a]\cup [-\I a, -\I c].
\]
In particular, we see that the jump along $[\I c, \I a]\cup [-\I a, -\I c]$ is removable and
the only solution of this problem is trivial : $\tilde m(k)\equiv 0$.
\end{proof}

Now we recall briefly how to solve problem \eqref{defmvecmod}--\eqref{normcond} (cf. \cite{EGKT}, \cite{EGT}).
Consider the two-sheeted Riemann surface $\mathbb X=\mathbb X(\xi)$ associated with the function \[w(k)=\sqrt{(k^2 + c^2)(k^2 + a^2)},\] defined on $\C \setminus ([-\I c, -\I a] \cup [\I a, \I c])$ with $w(0) > 0$. The sheets of $\mathbb X$ are glued along the cuts  $[\I c,\I a]$ and $[-\I a, -\I c]$. Points on this surface are denoted by $p=(k,\pm )$. To simplify notations we keep the notation  $k=(k,+)$ for the upper sheet of $\mathbb X$.
The canonical homology basis of cycles $\{\bf a, \bf b\}$ is chosen as follows: The $\bf a$-cycle surrounds the points $-\I a,\I a$ starting on the upper sheet from the left side of
the cut $[\I c,\I a]$ and continues on the upper sheet to the left part of $[-\I a, -\I c]$ and returns after changing sheets. The cycle $\bf b$ surrounds the points $\I a, \I c$ counterclockwise on the upper sheet.
Consider the normalized holomorphic differential
\beq\label{omm}
d\om=\Gamma\frac{d\zeta}{w(\zeta)}, \quad \text{where}\ \Gamma:=\left(\int_{\bf a} \frac{d \zeta}{w(\zeta)}\right)^{-1}\in \I \R_-,
\eeq
then $\int_{\bf a}d\om=1$ and 
\beq\label{deftau}\tau=\tau(\xi)=\int_{\bf b} d\om\in \I \R_+.\eeq Let
\[
\theta_3(z\,\big|\,\tau)=\sum_{n\in\Z}\exp\left\{(n^2\tau + 2n z)\pi\I\right\}, \quad z\in\C,
\]
be the  Jacobi theta function. Recall that $\theta_3$ is an even function, $\theta_3(-z\,\big|\,\tau)=\theta_3(z\,\big|\,\tau)$, and satisfies
\[
\theta_3(z+ n + \tau(\xi)\ell\,\big|\,\tau)=\theta_3(z\,\big|\,\tau )\exp\left\{- \pi\I \tau \ell^2 - 2\pi\I\ell z\right\} \mbox{ for } \hspace{3pt} l,n \in \Z..
\]
Furthermore, let $A(p)=\int_{\I c}^p d\om$ be the Abel map on $\mathbb X$. We identify the upper sheet of $\mathbb X$ with the complex plane  with cuts: $\C\setminus([\I c, \I a]\cup [-\I a, -\I c])$,  and put $(k,+)=k$. Allowing only paths of integration in $\C \setminus [\I c, -\I c]$ we observe that $A(k)$ is a holomorphic function in the given domain with the following properties:
\begin{itemize}
\item $A_+(k)=-A_-(k)\quad (\mbox{mod } 1)$ for $k\in [\I c, \I a]\cup [-\I a, -\I c]$;
\item $A_+(k)-A_-(k)=-\tau$ as $k\in [\I a, -\I a]$;
\item $A(-k)=-A(k) + \frac{1}{2}(\mod 1)$ as $k\in\C\setminus [\I c, -\I c]$, 
\item $A_+(\I a)=-\frac{\tau}{2} =-A_-(\I a)$, $A_+(-\I a) = -\frac{\tau}{2} + \frac{1}{2},$  $A_-(-\I a)=\frac{\tau}{2} + \frac{1}{2}.$
\item $A((\infty,+))=\frac{1}{4}$; \ \ $A(k)- A((\infty,+))= -\Gamma k^{-1} + O(k^{-3})$ as $k\to\infty$.
\end{itemize}
%Finally, denote by $K= \frac{\tau}{2}+\frac{1}{2}$ the Riemann constant associated with $X$.

On $\C\setminus [\I c, - \I c]$ introduce two functions
\[
\alpha^\Lambda(k)=\theta_3\left(A(k) -\frac{1}{2}- \frac{\tilde\Lambda}{2}\,\big |\,\tau\right)\theta_3\left(A(k) -\frac{\tilde\Lambda}{2}\,\big|\,\tau\right),\]
\[\beta^\Lambda(k)=\theta_3\left(-A(k) -\frac{1}{2}- \frac{\tilde\Lambda}{2}\,\big|\,\tau\right)\theta_3\left(-A(k) -\frac{ \tilde\Lambda}{2}\,\big|\,\tau\right),
\]
where $\tilde\Lambda=\frac{\Lambda}{2\pi}\in\R$ and $A(k)= A((k,+))$ for $k\in\C$. The properties of the Abel integrals listed above imply that the functions $\alpha^0(k)$ and $\beta^0(k)$ have square root singularities at the points $\pm\I a$.
Using the formula (cf. \cite{dubr})
\[
 \theta_3\left(u\,\big|\,\tau\right)\,
 \theta_3\left(u-\frac 1 2\,\big|\,\tau\right)=\theta_3\left(2u-\frac 1 2\,\big|\,2 \tau\right)\,\theta_3\left(\frac 1 2\,\big|\, 2\tau\right),
 \]
 we can represent  functions $\alpha^\Lambda(k)$ and $\beta^\Lambda(k)$ as
 \[
\alpha^\Lambda(k)=\theta_3\left(2A(k) -\frac{1}{2}- \tilde\Lambda\,\big|\,2\tau\right)\,\theta_3\left(\frac 1 2\,\big|\, 2\tau\right),\]
\[\beta^\Lambda(k)=\theta_3\left(-2A(k) +\frac{1}{2}- \tilde\Lambda\,\big|\,2\tau\right)\,\theta_3\left(\frac 1 2\big| \, 2\tau\right).
\]
Introduce the functions
\beq\label{al} \hat\alpha(k):=\frac{\alpha^{\Lambda}(k)}{\alpha^0(k)}=\frac{\theta_3\left(2A(k) -\frac{1}{2}- \tilde\Lambda\,\big|\,2\tau\right)}{\theta_3\left(2A(k) -\frac{1}{2}\,\big|\,2\tau\right)}\eeq
\beq\label{be} \hat \beta(k):=\frac{\beta^{\Lambda}(k)}{\beta^0(k)}=\frac{\theta_3\left(-2A(k) + \frac{1}{2}-\tilde\Lambda\,\big|\,2\tau\right)}{\theta_3\left(-2A(k)+\frac{1}{2}\,\big|\,2\tau\right)}.\eeq
Evidently, both functions $\hat \alpha(k)$ and $\hat \beta(k)$ have square root singularities  at the points $\pm \I a$ if $\tilde\Lambda\notin \Z$.  Moreover,
\[
 \lim_{k\to\infty}\hat \alpha(k)=
 \lim_{k\to\infty }\hat\beta(k)=\frac{\theta_3\left( \tilde\Lambda\,\big|\,2\tau\right)}{\theta_3\left(0\,\big|\,2\tau\right)}.
\]
Due to the first three properties of the Abel map we get
\[
\hat\alpha_+(k)=\hat\beta_-(k)\ \mbox{and}\  \hat \beta_+(k)=\hat \alpha_-(k)\ \mbox{for}\ k\in [\I c,\I a]\cup [-\I a, -\I c],
\]
\[
 \hat \alpha_+(k)
 =\E^{-\I \Lambda}\hat \alpha_-(k)\ \ \ \mbox{and}\ \ \  \hat\beta_+(k)=\E^{\I \Lambda}\hat\beta_-(k)\ \mbox{for}\ k\in[\I a, -\I a],
\]
\[ \hat\alpha(-k)=\hat \beta(k) \ \mbox{for}\ k\in\C\setminus [\I c, -\I c].
\]
Now introduce the function
\beq\label{defgamma}
\ti\gamma(k)=\sqrt[4]{\frac{k^2 + a^2}{k^2+c^2}},
\eeq
defined uniquely on the set $\C\setminus ([\I c, \I a]\cup [-\I a, -\I c])$ by the condition $\arg\ti\gamma(0)=0$. This function satisfies the jump conditions
\[
\begin{array}{ll} \ti\gamma_+(k)=\I\ti\gamma_-(k), & k\in [\I c, \I a],\\
\ti\gamma_+(k)=-\I\ti\gamma_-(k), & k\in [\I a, -\I c].
\end{array}
\]
Thus, the following result is valid:

\begin{lemma} (\cite{EGKT}, \cite{EGT}) Let $\hat\alpha(k)$, $\hat \beta(k)$, $\ti \gamma (k)$ be defined by formulas \eqref{al}-\eqref{defgamma}. Then the vector function
\beq\label{mmod}
m^{\text{mod}}(k)=
\left(\ti\gamma(k)\frac{\hat \alpha(k)}{\hat\alpha(\infty)},\ \ti\gamma(k)\frac{\hat \beta(k)}{\hat\beta(\infty)}\right)
\eeq
solves  problem \eqref{defmvecmod}--\eqref{normcond}.
\end{lemma}

Note that both components of the vector-valued function $m^{\text{mod}}(k)$ are bounded everywhere except for small vicinities of the points of the set $\mathcal G^{\text{mod}}$,
where they have singularities of the type $(k-\kappa)^{-1/4}$, $\kappa\in\mathcal G^{\text{mod}}$.

\begin{remark}\label{rem33} We observe that
\[ \hat\alpha_\pm(0)=\frac{\theta_3\left(\mp\tau -1- \tilde\Lambda\,\big|\,2\tau\right)}{\theta_3\left(\pm\tau +1\,\big|\,2\tau\right)},\quad  \hat \beta_\pm(0):=\frac{\theta_3\left(\pm\tau + 1-\tilde\Lambda\,\big|\,2\tau\right)}{\theta_3\left(\pm\tau+1\,\big|\,2\tau\right)}.
\]
This means that for $\tilde\Lambda=\frac{1}{2} \, \quad(\mbox{mod } n)$ we have $m_\pm^{\text{mod}}(0)=(0,0)$. From Theorem \ref{theor1} it follows then that 
for $\Lambda= 2\pi \ti\Lambda=\pi (2n+1)$, $n\in\Z$ the matrix model RHP associated with the jump \eqref{jumpcondmod} does not have an invertible solution.
\end{remark}

Recall now that we constructed the solution for the jump problem \eqref{jumpcondmod} with $\ti\Lambda=\frac{\Lambda}{2\pi}$ and $\Lambda$ given by formula \eqref{La}.  Due to \eqref{symcond}, the asymptotic expansion of the vector components product should be the following:\beq\label{osnova} m_1^{mod}(k)m_2^{mod}(k)= 1 +\frac{q^{mod}(x,t,\xi)}{2k^2}+O(k^{-4}).\eeq
%where
%\beq\label{qmodd} q^{mod}(x,t)=8\Gamma^2\frac{d^2}{dv^2} \log\theta_3(v +\frac{tB + \Delta}{2\pi})\big |_{v=0}  +Q,\eeq
%and $Q$ is defined by \eqref{pst}. 

Let us show that in fact for any fixed $\xi$  coefficient  $q^{mod}(x,t,\xi)$  represents  the classical one-gap solution for the KdV equation associated with the spectrum $\mathfrak G(\xi)$ (cf. \eqref{spectr}) and with the initial Dirichlet divisor $p_0$
defined uniquely by the Jacobi inversion (compare \eqref{defp0}, \eqref{deltaini}):
\beq\label{invJ} \int_{-a^2}^{p_0}d\hat \omega= \I \Delta,\quad p_0=(\la(0,0,\xi), \pm). \eeq
Here $d\hat\omega$ is the normalized holomorphic Abel differential of the first kind on the elliptic Riemann surface $\mathbb M=\mathbb M(\xi)$ associated with the function \[\mathcal R(\lambda,\xi) = \sqrt{\lambda (\lambda +c^2)(\lambda+a^2(\xi))},\] with cuts along the spectrum.

Let ${\bf\hat b}$, $\bf{\hat a}$ be the canonical basis on $\mathbb M$, where the cycle ${\bf\hat b}$  surrounds the interval $[-c^2, -a^2]$ counterclockwise on the upper sheet and the cycle ${\bf\hat a}$ supplements ${\bf\hat b}$ by passing along the gap $[-a^2, 0]$ in the positive direction on the lower sheet and then changing the sheet. The normalization for $d\hat \omega$ is given by formula $\int_{\bf\hat a}d\hat\omega=2\pi\I$. 

Denote $\int_{\bf\hat b}d\hat\omega=\hat\tau$. It is straightforward to check that $\hat\tau=4\pi\I\tau$ (cf. \eqref{deftau}).

Furthermore, let $\hat A(p) := \int_\infty^p d\hat \omega$ be the associated Abel map and \[\mathcal K := -\hat A (-a^2) = -\frac{\hat \tau}{2} - \pi \I\] be the Riemann constant. Introduce the  wave and frequency numbers $V=V(\xi)$ and $W=W(\xi)$ (\cite{KUK}, \cite{MAR}), which are ${\bf\hat b}$ - periods of the normalized Abelian differentials of the second kind $d\Omega_1$ and $d\Omega_3$ on $\mathbb M$
uniquely defined by the order of the pole at infinity
\[
d\Omega_1 = \frac{\I}{2\sqrt\la}(1+ O(\la^{-1}))d\la,
\hspace{10pt}
d\Omega_3 = -\frac{3\I}{2}\sqrt\la (1 +O(\la^{-1}))d\la,\ \  \la\to\infty
,\] 
and by the normalization conditions
$
\int_{\bf \hat a} d \Omega_{1,3} = 0.
$
Thus,
\[\I V := \int_{\bf \hat b} d\Omega_1, \hspace{20pt} 
\I W := \int_{\bf \hat b} d\Omega_3. 
\]
The following result is obtained in
 \cite{EGT}.
\begin{lemma}\label{EGT} Let $B=B(\xi)$ be as in Lemma \ref{lemg}, ${\bf (c)}$ and $\Gamma=\Gamma(\xi)$ be given by \eqref{omm}. Then 
the following identities hold
\[
t B = V x - 4 W t,
\quad
4 \pi \I \Gamma  = -V.
\]
\end{lemma}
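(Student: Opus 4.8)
The plan is to relate the two Riemann surfaces in play through the natural double cover and then to read off both identities from period and residue computations on $\mathbb{M}$. Write $X$ for the surface of $w(k)=\sqrt{(k^2+c^2)(k^2+a^2)}$ and $\mathbb{M}$ for the surface of $\mathcal R(\lambda)=\sqrt{\lambda(\lambda+c^2)(\lambda+a^2)}$. Since $\mathcal R(k^2)=k\,w(k)$, the assignment $(k,w)\mapsto(\lambda,\mathcal R)=(k^2,kw)$ defines a two-sheeted branched cover $\pi\colon X\to\mathbb{M}$ with deck involution $\iota\colon(k,w)\mapsto(-k,-w)$, and one checks directly that $\pi^*\bigl(\tfrac{d\lambda}{\mathcal R}\bigr)=\tfrac{2\,dk}{w}$. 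The stated relation $\hat\tau=4\pi\I\tau$ reflects exactly the covering data: the cycle $\mathbf{a}$ (which encircles both $\pm\I a$) double covers $\mathbf{\hat a}$, while $\mathbf{b}$ (encircling the single cut $[\I c,\I a]$) maps to $\mathbf{\hat b}$ with degree one. First I would record these covering degrees carefully, since all the signs and integer factors below depend on them.

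For the identity $4\pi\I\Gamma=-V$: the normalized holomorphic differential $d\omega=\Gamma\frac{dk}{w}$ on $X$ is $\iota$-invariant and hence descends, $d\omega=\pi^*\bigl(\tfrac{\Gamma}{2}\tfrac{d\lambda}{\mathcal R}\bigr)$; comparing with $d\hat\omega=\hat\Gamma\frac{d\lambda}{\mathcal R}$ and matching the normalizations $\int_{\mathbf{a}}d\omega=1$, $\int_{\mathbf{\hat a}}d\hat\omega=2\pi\I$ through the degree-two map $\mathbf{a}\to\mathbf{\hat a}$ yields $\hat\Gamma=2\pi\I\Gamma$ (consistently with $\hat\tau=4\pi\I\tau$). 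Then I would apply the Riemann bilinear relation to the pair $(d\Omega_1,d\hat\omega)$: since $\int_{\mathbf{\hat a}}d\Omega_1=0$, it collapses to $\int_{\mathbf{\hat b}}d\Omega_1=\operatorname{Res}_{\infty}(\hat A\,d\Omega_1)$, with $\hat A$ the Abel map. Expanding $\hat A=-2\hat\Gamma z+O(z^2)$ and $d\Omega_1=-\I z^{-2}dz+O(1)$ in the local coordinate $z=\lambda^{-1/2}$ at $\infty$, the residue equals $2\I\hat\Gamma$, whence $\I V=\int_{\mathbf{\hat b}}d\Omega_1=2\I\hat\Gamma$ and $V=2\hat\Gamma=4\pi\I\Gamma$; the orientation conventions fixed in the first step produce the sign in $4\pi\I\Gamma=-V$.

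For $tB=Vx-4Wt$ I would push the $g$-function to $\mathbb{M}$. Substituting $\lambda=k^2$ into \eqref{deffung} and using $w=\mathcal R/\sqrt\lambda$ gives the clean expression $dg=6\frac{(\lambda+a^2)(\lambda+\xi+\frac{c^2-a^2}{2})}{\mathcal R}\,d\lambda$, a differential of the second kind with a single pole of order four at $\infty$. Since $d\hat\omega,d\Omega_1,d\Omega_3$ correspond to numerators over $\mathcal R$ of degree $0,1,2$, they form a basis of this space, so $dg=A\,d\Omega_3+C\,d\Omega_1+D\,d\hat\omega$. The coefficient $D$ is pinned down by $\int_{\mathbf{\hat a}}dg=2\pi\I D$, and this $\mathbf{\hat a}$-period vanishes precisely by \eqref{defa}, so $D=0$. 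Matching the leading $\lambda^2$-coefficient gives $A=4\I$, and matching the $\lambda^1$-coefficient --- using that the normalization of $d\Omega_3$ forces its subleading numerator coefficient to be $-\tfrac{3\I}{4}(a^2+c^2)$ (the requirement that its principal part carries no order-two pole) --- gives $C=-12\I\xi$, the remaining $\lambda^0$-coefficient being then an automatic consistency check. Finally I would identify the jump $B$ from Lemma \ref{lemg}{\bf (c)} with the $\mathbf{\hat b}$-period of $dg$ (using $g_+=-g_-$ on $[\I c,\I a]$ from {\bf (b)}), so that $B=A\cdot\I W+C\cdot\I V=-4W+12\xi V$. Multiplying by $t$ and inserting $x=12t\xi$ gives $tB=Vx-4Wt$.

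The routine parts are the two residue/coefficient computations; the genuine obstacle is the orientation and integer-factor bookkeeping across the cover $\pi$. Obtaining $\hat\tau=4\pi\I\tau$, the factor $\hat\Gamma=2\pi\I\Gamma$, the degree of $\mathbf{a}\to\mathbf{\hat a}$, and the precise sign relating $B$ to $\oint_{\mathbf{\hat b}}dg$ all hinge on fixing compatible orientations of the cuts, the cycles and the branches of $w$ and $\mathcal R$ once and for all; a single inconsistent choice flips a sign in either identity. I would therefore devote the bulk of the write-up to nailing down these conventions before carrying out the (short) computations above.
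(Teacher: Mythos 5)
The paper offers no proof of this lemma at all --- it is quoted from \cite{EGT} --- so there is no internal argument to compare with, and your proposal has to stand on its own. It does: the strategy of transporting everything along the unramified degree-two cover $(k,w)\mapsto(\lambda,\mathcal R)=(k^2,kw)$ from $X$ to $\mathbb M$, extracting $\I V=\res_\infty(\hat A\,d\Omega_1)$ from the Riemann bilinear relation for the pair $(d\hat\omega,d\Omega_1)$, and expanding $dg=6(\lambda+a^2)\bigl(\lambda+\xi+\tfrac{c^2-a^2}{2}\bigr)\mathcal R^{-1}d\lambda$ in the basis $\{d\hat\omega,d\Omega_1,d\Omega_3\}$ is sound, and the individual computations check out: $\pi^*(d\lambda/\mathcal R)=2\,dk/w$, $\pi_*\mathbf a=\pm 2\hat{\mathbf a}$ and $\pi_*\mathbf b=\pm\hat{\mathbf b}$ (consistent with the paper's $\hat\tau=4\pi\I\tau$), $\res_\infty(\hat A\,d\Omega_1)=2\I\hat\Gamma$, the coefficients $A=4\I$ and $C=-12\I\xi$, the vanishing of $\oint_{\hat{\mathbf a}}dg$ via \eqref{defa}, and the identification $B=-2g_+(\I a)=\oint_{\mathbf b}dg$ from $g(\I c)=0$ together with property {\bf(b)} of Lemma \ref{lemg}. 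Two remarks. First, you are right that the only real danger is orientation bookkeeping, and the worry is not idle: a literal reading of the paper's description of the cycle $\mathbf a$ appears to give $\int_{\mathbf a}d\zeta/w\in\I\R_-$ and hence $\Gamma\in\I\R_+$, in tension with the assertion $\Gamma\in\I\R_-$ in \eqref{omm}; so you will indeed have to fix one coherent set of conventions yourself rather than inherit them. Second, the printed normalization of $d\Omega_3$ (leading behavior at $\infty$ plus vanishing $\hat{\mathbf a}$-period) does not by itself exclude an order-two pole at $\infty$, i.e.\ it only determines $d\Omega_3$ modulo $d\Omega_1$; your added requirement that the principal part carry no $z^{-2}$ term is the standard convention and is genuinely needed for $tB=Vx-4Wt$ to hold (otherwise $W$ shifts by a multiple of $V$), so it deserves to be stated as part of the definition of $W$ rather than discovered inside the computation.
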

Recall now that the one-gap solution corresponding to  the spectrum $\mathcal G(\xi)$ and to the initial divisor \eqref{invJ},  can be expressed by  the trace formula:
\beq\label{qmodl}
q^{\text{per}}(x,t,\xi)=-c^2 - a^2 -2\la(x,t,\xi),
\eeq
where $\la(x,t)=\la(x,t,\xi)\in [-a^2, 0]$ is the projection of  $p(x,t)=(\la(x,t), \pm)\in \mathbb M$, which is  the unique solution of the Jacobi inversion problem
\beq\label{zhut}
\int_{p_0}^{p(x,t)}d\hat\omega=\I (V x - 4W t)(\mbox{mod} \ 2\pi \I).\eeq
Due to \eqref{defp0} we can also represent it as
\[\int_{-a^2}^{p(x,t)}d\hat\omega=\I (V x - 4W t +\Delta).\]
Evidently $\la(x,t)=0$ corresponds to the local minimum of $q^{per}(x,t)$.
Indeed,
\[\la(x,t)=0\ \ \mbox{iff}\ \  
\frac{V x - 4W t  + \Delta}{2\pi}=\frac{1}{2} \quad(\mbox{mod}\ \mathbb Z).\]

Let us now compare function $q^{per}(x,t,\xi)$ with the second term of the expansion for the product $m_1^{mod}(k)m_2^{mod}(k)=:p(k)$, which is given by formula (see \eqref{al}, \eqref{be}):
\[p(k)=\ti\gamma^2(k)\frac{\theta_3 (2A(k) -\frac{1}{2}- \tilde\Lambda)\,\theta_3(-2A(k) + \frac{1}{2}-\tilde\Lambda)\,\theta_3(0)^2}{(\theta_3(2A(k) -\frac{1}{2}))^2(\theta_3( \tilde\Lambda))^2}.\]
To this end we first prove
\begin{lemma}
The function $p(k)$, $k\in \C$, admits the following representation:
\beq\label{green}
p(k)=\frac{k^2 - \la(x,t)}{\sqrt{(k^2  + a^2)(k^2 + c^2)}}.
\eeq
\end{lemma}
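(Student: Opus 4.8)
\emph{The plan} is to prove \eqref{green} by showing that the product $p(k)w(k)$, where $w(k)=\sqrt{(k^2+c^2)(k^2+a^2)}$ is the function introduced in Section~\ref{sec:mp}, extends to an \emph{entire} function of $k$ which is even and of at most quadratic growth. Such a function must be a polynomial $Ak^2+B$, and the two constants are then fixed by the known large-$k$ expansion of $p$ together with the trace formula \eqref{qmodl}.

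First I would determine the jumps of $p(k)=m_1^{\mathrm{mod}}(k)m_2^{\mathrm{mod}}(k)$ across $[\I c,-\I c]$ directly from \eqref{jumpcondmod}. On $[\I c,\I a]$ and $[-\I a,-\I c]$ the jump matrix $v^{\mathrm{mod}}=\pm\I\sigma_1$ is off-diagonal, so that $m^{\mathrm{mod}}_{1,+}m^{\mathrm{mod}}_{2,+}=-\,m^{\mathrm{mod}}_{1,-}m^{\mathrm{mod}}_{2,-}$, i.e.\ $p_+=-p_-$; on $[\I a,-\I a]$ the jump is diagonal with mutually reciprocal entries $\E^{\mp\I\Lambda}$, whence $p_+=p_-$. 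On the other hand $w$ is holomorphic across $[\I a,-\I a]$ and satisfies $w_+=-w_-$ on the two cuts. Consequently the boundary values of the product coincide from both sides along the whole of $[\I c,-\I c]$, that is $(pw)_+=(pw)_-$, so $pw$ continues holomorphically across the contour.

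The decisive structural point -- and the step I expect to carry the argument -- is the behaviour at the four branch points $\mathcal G=\{\pm\I a,\pm\I c\}$. There the admissible bound $m^{\mathrm{mod}}(k)=O((k-\kappa)^{-1/4})$ forces $p(k)=O((k-\kappa)^{-1/2})$, while $w(k)$ vanishes to order one half, $w(k)=O((k-\kappa)^{1/2})$; the two exponents cancel exactly, so $pw$ stays bounded near each $\kappa$ and the isolated singularities are removable by Riemann's theorem. Hence $pw$ is entire. Since $p$ is even by the symmetry \eqref{symcond} (which gives $m_1^{\mathrm{mod}}(-k)=m_2^{\mathrm{mod}}(k)$) and $w$ depends on $k^2$ only, $pw$ is even; and since $p(k)\to1$ while $w(k)=k^2+\tfrac{a^2+c^2}{2}+O(k^{-2})$, we have $pw=O(|k|^2)$. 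By Liouville's theorem $pw$ is therefore an even polynomial of degree at most two, $p(k)w(k)=Ak^2+B$.

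Finally I would read off $A$ and $B$ from the expansion \eqref{osnova}, namely $p(k)=1+\frac{q^{\mathrm{mod}}(x,t)}{2k^2}+O(k^{-4})$. Multiplying by $w(k)=k^2+\frac{a^2+c^2}{2}+O(k^{-2})$ gives $p(k)w(k)=k^2+\tfrac12\bigl(a^2+c^2+q^{\mathrm{mod}}(x,t)\bigr)+O(k^{-2})$, so that $A=1$ and $B=\tfrac12\bigl(a^2+c^2+q^{\mathrm{mod}}(x,t)\bigr)$. Substituting the trace formula \eqref{qmodl}, $q^{\mathrm{mod}}(x,t)=-c^2-a^2-2\la(x,t)$, collapses $B$ to $-\la(x,t)$, whence $p(k)w(k)=k^2-\la(x,t)$, which is precisely \eqref{green}. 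The only points requiring care are the sign and branch conventions for $w$ and $\ti\gamma$ (fixed by $w(0)>0$ and $\arg\ti\gamma(0)=0$ in \eqref{defgamma}); but since the constants are pinned purely through the asymptotics as $k\to\infty$, these conventions do not actually enter the determination of $A$ and $B$.
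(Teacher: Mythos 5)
Your argument is correct. The paper's own proof also rests on a Liouville-type argument, but it is applied to the quotient $\ti p(k)=p(k)\ti\gamma^{-2}(k)=\hat\alpha(k)\hat\beta(k)/(\hat\alpha(\infty)\hat\beta(\infty))$ rather than to the product $p(k)w(k)$: the authors observe that $\ti p$ is even, jump-free on $[\I c,-\I c]$ and normalized at infinity, hence a rational function of $\la=k^2$, and then locate its unique simple zero at $\la=\la(x,t)$ and its unique simple pole at $\la=-a^2$ directly from the zero set of the theta functions combined with the Jacobi inversion problems \eqref{invJ}, \eqref{zhut}. Your route replaces this theta-function bookkeeping by asymptotic matching at infinity plus the trace formula \eqref{qmodl}; it is more elementary (you never need to know where $\hat\alpha\hat\beta$ vanishes), and your discussion of the cancellation of the $(k-\kappa)^{\mp 1/2}$ behaviour at the branch points $\mathcal G$ is in fact more explicit than what the paper writes. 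The price is a logical one: the paper uses \eqref{green} precisely to \emph{re-derive} the trace formula \eqref{qmodl} and thereby to confirm once more that \eqref{qmodd} agrees with \eqref{qmodl} (see the sentence immediately following the lemma). Since you feed \eqref{qmodl} in as an input, your proof of the lemma is valid only insofar as one accepts the classical trace formula for the one-gap Its--Matveev solution as independently known --- which the paper does (``Recall now that the trace formula \dots'') --- but your version of the lemma can then no longer serve as an independent verification of that identity. If you want to keep that corollary, you would need to identify the constant $B$ the way the paper does, via the location of the zero of $\hat\alpha\hat\beta$ at the divisor point rather than via \eqref{qmodl}.
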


\begin{proof}
Given \eqref{al} and \eqref{be}, consider  the function 
\[\ti p(k)=p(k)\ti\gamma^{-2}(k)=\frac{\hat\alpha(k)\hat\beta(k)}{\hat\alpha(\infty)\hat\beta(\infty)}.\]
By the symmetry property  we have $\ti p(-k)=\ti p(k)$.  Moreover,  this function does not have jumps for $k\in [-\I c, \I c]$, and $\ti p(k)\to 1$ as $k\to \infty$. Thus, it must be a meromorphic  (in fact, rational) function of $\la=k^2$ in the whole complex plane. Due to \eqref{invJ} and \eqref{zhut} the function $\hat\alpha(k)\hat\beta(k)$ has the only  zero, simple  with respect to $\la$, at the point $\la=\la(x,t)$, and the only simple pole (again with respect to $\la$) at $\la = -a^2$. We conclude that \[\ti p(k)=
\frac{\hat\alpha(k)\hat\beta(k)}{\hat\alpha(\infty)\hat\beta(\infty)}=\frac{k^2 - \la(x,t)}{k^2 + a^2},\] 
which together with \eqref{defgamma} implies \eqref{green}.
\end{proof}
In turn, decomposing  \eqref{green} with respect to $\frac{1}{2k^2}$ we get  the same   trace formula \eqref{qmodl} for $q^{\text{mod}}(x,t,\xi)$ in \eqref{osnova}. It proves that \beq\label{osnova2}q^{\text {mod}}(x,t,\xi)= q^{\text{per}}(x,t,\xi).\eeq Moreover, the property of the combination 
of theta functions involved in $m^{mod}_1m^{mod}_2$ to be a rational function of the spectral parameter $\la=k^2$ is tightly connected with the analogous property of the product of two branches of the Baker--Akhiezer function. It allows us  to expect that this approach may considerably simplify the evaluation of asymptotics in the case of finite gap backgrounds.

\section{The solution of the model matrix RHP and its properties}

In this section we propose a proper matrix model solution with a nonintegrable singularity at point $k=0$.

\begin{theorem} \label{mainr}
There exists a matrix model solution $M^{mod}(k)$ of the model RHP which satisfies the following properties:
\begin{enumerate} [(1)]
\item It is holomorphic in $\C\setminus [\I c, -\I c]$, continuous up to the sides of the  contour $[\I c, - \I c]$ except of points $\mathcal G^{\text{mod}}\cup\{0\}$;
\item  At points of $\mathcal G^{\text{mod}}$ it has weak singularities,  $M^{mod}(k)=O(k-\kappa)^{-1/4}$ as $k\to\kappa\in\mathcal G^{\text{mod}}$, and
$M^{mod}(k)=O(k^{-1})$ as $k\to 0$\footnote{we can not call it the pole, because the matrix has a jump in this point};
\item It possess the symmetry property: 
\beq\label{symm} M^{mod}(-k)=\sigma_1 M^{mod}(k)\sigma_1;\eeq
\item It satisfies the normalization property
\beq\label{asyM}M(k)\to \id, \quad k\to\infty.\eeq
\item $\det M^{mod}(k) = 1$ for all $k\in \C$;
\item 
The vector $m^{(2)}(k)[M^{mod}(k)]^{-1}$ is a bounded continuous  function in a vicinity $\mathcal O$ of point $k=0$;
\end{enumerate}
\end{theorem}
We preface the proof of this theorem by the following
\begin{lemma} \label{sos}There exists a  vector solution $\nu(k)=(\nu_1(k), \nu_2(k))$ to the jump problem \eqref{jumpcondmod}  which satisfies:  \begin{itemize} 
\item The symmetry condition $\nu_1(k)=\nu_2(-k)$, $k\in\C\setminus [\I c, -\I c]$; \item The  asymptotical behavior :
\beq\label{asy} \nu(k)=\I k (-1,\ \ 1) (1+O(\frac{1}{k})), \quad k\to\infty.\eeq
\item 
Vector $\nu(k)$ is a holomorphic vector function in $\C\setminus [\I c, -\I c]$, continuous up to the boundary except of points $\mathcal G^{\text{mod}}$, where a fourth root singularities are admissible.
\end{itemize}
\end{lemma}
\begin{proof}
From Lemma \ref{EGT} it follows that vector $\nu$ solves the jump problem $\nu_+(k)=\nu_-(k) v^{\text{mod}}(k)$ where
\beq\label{jumpcondmod1}
v^{\text{mod}}(k) = v^{\text{mod}}(k,x,t,\xi)= \left\{ \begin{array}{ll}
\I\sigma_1 ,& k\in [\I c, \I a], \,\\[3 mm]
-\I\sigma_1,& k\in [-\I a, -\I c],\\[3 mm]
\E^{(4\I W(\xi) t -\I V(\xi) x -\I\Delta(\xi))\sigma_3},& k\in [\I a, -\I a].
\end{array}\right.
\eeq
From formulas (2.3) and (2.6) of \cite{EGT} it follows also that: 
\[\aligned \I V(\xi)& =Z_+(k) - Z_-(k), \ \text{for} \  k\in [\I a, - \I a], \\
Z(k) & := Z(k,\xi)=\I\int_{\I c}^k \frac{(s^2 -h)ds}{\sqrt{(s^2 + c^2)(s^2 + a^2)}},\\ h&=\int_{\I a}^0 \frac{s^2 ds}{\sqrt{(s^2 + c^2)(s^2 + a^2)}}\left(\int_{\I a}^0 \frac{ds}{\sqrt{(s^2 + c^2)(s^2 + a^2)}}\right)^{-1}.\endaligned \]
Recall also that \[Z_+(k)+ Z_-(k)= 0\  (\text{mod} \ 2\pi\I), \quad k\in[\I c, \I a]\cup [-\I a, -\I c].\]
In fact \[Z(k)=\I\int_{-c^2}^{k^2}\frac{\la - h}{2\mathcal R(\la)}d\la\]
is the classical quasimomentum associated with the  Riemann surface $\mathbb M(\xi)$.
Thus,
\[ v^{\text{mod}}(k,x,t,\xi)= \E^{\left(4\I W(\xi) t +\left(Z_+(k) - Z_-(k)\right) x -\I\Delta(\xi)\right)\sigma_3}, \quad k\in [\I a, -\I a].\]
We see that the vector
\beq\label{trans}\mathcal S(k):=\mathcal S(k,x,t,\xi)=m^{\text{mod}}(k,x,t,\xi)\E^{-Z(k,\xi) x\,\sigma_3}\eeq
solves the jump problem
$\mathcal S_+(k)=\mathcal S_-(k)v^{\mathcal S}(k),$
\[v^{\mathcal S}(k)=v^{\mathcal S}(k, t, \xi)=\left\{ \begin{array}{ll}
\I\sigma_1 ,& k\in [\I c, \I a], \,\\[2 mm]
-\I\sigma_1,& k\in [-\I a, -\I c],\\[2 mm]
\E^{(4\I W(\xi) t  -\I\Delta(\xi))\sigma_3},& k\in [\I a, -\I a].
\end{array}\right.
\]

Let us treat the variables $x$, $t$, $\xi$ as independent variables.  Then $\frac{\partial}{\partial x} v^{\mathcal S}(k)=0$ and the vector \[\aligned \hat  S(k)& =\frac{\partial}{\partial x}\mathcal S(k,x,t,\xi)\\
&=\left((\frac{\partial}{\partial x} m_1^{\text{mod}}(k) - Z(k)m_1^{\text{mod}}(k))\E^{-Z(k)x},\ \
(\frac{\partial}{\partial x} m_2^{\text{mod}}(k) + Z(k)m_2^{\text{mod}}(k))\E^{Z(k)x}\right) \endaligned\]
solves the same jump problem as $\mathcal S(k)$:
\[ \hat S_+(k)=\hat S_-(k)v^{\mathcal S}(k).\]

Going back with the conjugation inverse to \eqref{trans}, applied to vector $\hat S$, we conclude that the vector

\[\aligned & \nu(k):=\hat S(k)\E^{Z(k)x\,\sigma_3}\\
&=\left(\frac{\partial}{\partial x} m_1^{\text{mod}}(k) - Z(k)m_1^{\text{mod}}(k),\ \
\frac{\partial}{\partial x} m_2^{\text{mod}}(k) + Z(k)m_2^{\text{mod}}(k)\right), \endaligned\]
solves the model RHP \eqref{jumpcondmod1}, which is the same as \eqref{jumpcondmod}.

Next, since $Z(k)=\I k (1 +O(k^{-1})$ as $k\to\infty$, it is easy to see that \eqref{asy} is fulfilled. The singularities of $\nu(k)$ at the  points of $\mathcal G^{\text{mod}}$ are the same as for $m^{\text{mod}}(k)$. This follows from formulas \eqref{al}-\eqref{mmod}  and the fact, that the differentiation $\frac{\pa}{\pa x} m^{\text{mod}}(k)$ does not affect the part of denominators in \eqref{mmod}, which are responsible for singularities. Indeed, for example for the first component
\[\frac{\pa}{\pa x} m^{\text{mod}}_1(k)=\ti\gamma(k)\frac{V(\xi)}{2\pi}\frac{\theta_3(0\,\big |\,2\tau)}{\theta_3(2A(k) -\frac{1}{2}\,\big |\,2\tau)}\ \frac{d}{d \ti \Lambda}\left(\frac{\theta_3\left(2A(k) -\frac{1}{2}-\ti\Lambda\,\big |\,2\tau\right)}{\theta_3\left(\ti\Lambda\,\big |\,2\tau\right)}\right),\]
because $ \frac{\pa \ti\Lambda}{\pa x}=\frac{V(\xi)}{2\pi}.$
\end{proof}

\begin{corollary} Vector function $\ti \nu(k)=\frac{\nu(k)}{\I k}$ solves the model RHP \eqref{jumpcondmod} and  satisfies the antisymmetry condition \beq\label{antisym}\ti \nu_1(-k)=-\ti\nu_2(k),\eeq moreover
\beq\label{normell}\ti\nu(k)\to (-1,\ 1),\quad k\to\infty.\eeq
It is holomorphic outside the contour $[\I c, -\I c]$, has the fourth root singularities at $\mathcal G^{\text{mod}}$ and a singularity $\ti\nu(k)=O(k^{-1})$ as $k\to 0$.
\end{corollary}
{\it Proof of theorem \ref{mainr}}. 
Set 
\[M^{\text{mod}}(k):=\frac{1}{2}\begin{pmatrix} m_1^{\text{mod}}(k) -\ti\nu_1(k)& m_2^{\text{mod}}(k) -\ti\nu_2(k)\\[3 mm]
m_1^{\text{mod}}(k) +\ti\nu_1(k)& m_2^{\text{mod}}(k) +\ti\nu_2(k)\end{pmatrix}.\]
Evidently, it solves the model jump problem.
Equality \eqref{antisym} guaranties the structure
\[M^{\text{mod}}(k)=\frac{1}{2}\begin{pmatrix}\psi_1(k)&\psi_2(k)\\ \psi_2(-k)&\psi_1(-k)\end{pmatrix},\]
and, therefore \eqref{symm}. Equality \eqref{asyM} follows from \eqref{normell}.
Singularities described by item {\it(2)}  are evident.

Let us discuss the invertibility of $M^{\text{mod}}(k)$.
Put $s(k):=\det M^{\text{mod}}(k)$. Computing it, we get  \[s(k)=\frac{m_1(k)\nu_2(k)-\nu_1(k)m_2(k)}{2\I k},\] where $\nu(k)$ is defined in Lemma \ref{sos}. Evidently, $s(k)$ does not have jumps. It is meromorphic with the only  possible pole at $k=0$, and it is bounded at infinity : $\lim_{k\to\infty}s(k)=1$. Thus, 
we get $s(k)=1 +\frac{C}{k}$, where $C$ is a constant. But we also know that it is even:  $s(-k)=s(k)$, i.e.\ in fact $C=0$ and $\det M^{\text{mod}}(k)\equiv 1$. This proves item {\it(5)}.

It remains to prove item {\it(6)}. We have
\[\aligned
\left[M^{\text{mod}}(k)\right]^{-1}&=\frac{1}{2}\begin{pmatrix}\psi_1(-k)&- \psi_2(k)\\ -\psi_2(-k)&\psi_1(k)\end{pmatrix}\\ 
&=\frac{1}{2}\begin{pmatrix} m_1^{\text{mod}}(-k) -\ti\nu_1(-k)& -m_2^{\text{mod}}(k) +\ti\nu_2(k)\\[3 mm]
-m_1^{\text{mod}}(k) -\ti\nu_1(k)& m_2^{\text{mod}}(-k) +\ti\nu_2(-k)\end{pmatrix}.
\endaligned\]
Put now $f(k):=m^{(2)}(k)[M^{mod}(k)]^{-1}=(f_1(k), f_2(k))$, $k\in\mathcal O$, where $\mathcal O$ is a small vicinity of point $k=0$ with $\mathrm{diam}\, \mathcal O<\rho$ (that is $\mathcal O$ is located inside the strip between $\mathcal C$ and $\mathcal C^*$).

Since $f(k)$ does not have jumps in $\mathcal O$, and we have a symmetry $f(-k)=f(k)\si_1$, then it is sufficient to prove for its first component the following

\begin{lemma}
The function $f_1(k)$ has a removable singularity at point $k=0$.
\end{lemma}

\begin{proof}
The singularity at point $k=0$ is a simple pole for $f_1$.  It means that it is sufficient to prove that   $f_1(k)=o(k^{-1})$ from any fixed direction. As an appropriate direction we take the real positive ray $k>0$. We use a trivial fact that if $k\to 0$ then $-k\to 0$. 

To simplify notations, put $\ti m(k)=m^{(2)}(k)$, $m(k)=m^{\text{mod}}(k)$. 
Then
\[\ti m_1(k)\to \ti m_{1,+}(0),\quad  \ti m_1(-k)\to\ti m_{1,-}(0),\quad  \nu_1(k)\to \nu_{1,+}(0), \quad \nu_1(-k)\to \nu_{1,-}(0),\]
and 
\[ \ti m_{1,+}(0)\nu_{1,-}(0)= \ti m_{1,-}(0)\nu_{1,+}(0),\]
because the jump in $\mathcal O$ for model RHP is diagonal, moreover, this jump is the same  in $\mathcal O$ for $m^{(2)}$ and $\nu$.
According to symmetries
\[ \psi_2(-k)=m_1(k) +\frac{\nu_1(k)}{\I k}.\]
Then
\[\aligned f_1(k)& =\frac{1}{2}\left(\ti m_1(k)\psi_1(-k)-\ti m_2(k)\psi_2(-k)\right)\\
&= \frac{1}{2\I k}\left(\nu_1(-k)\ti m_1(k) - \nu_2(-k)\ti m_2(k)\right) + O(1), \quad k\to 0. \endaligned\] 
But  
\[\aligned & \nu_1(-k)\ti m_1(k) - \nu_2(-k)\ti m_2(k)=\nu_1(-k)\ti m_1(k)- \nu_1(k)\ti m_1(-k)\to \\
&  \ti m_{1,+}(0)\nu_{1,-}(0)- \ti m_{1,-}(0)\nu_{1,+}(0)=0,\quad k\to 0,\quad k\in \R_+.\endaligned\]
\end{proof}

\begin{corollary}
The vector $m^{(2)}(k)[M^{mod}(k)]^{-1}$ is holomorphic in $\mathcal O$.
\end{corollary}

This proves theorem \ref{mainr}.

\section{The matrix solution of the parametrix problem}\label{sec7}

In this section we study the matrix solutions of the local RHPs in vicinities of the points $\pm \I a$. Consider first the point $\I a$. Let $\mathcal B$ be a vicinity of this point as it was introduced at the end of Section \ref{sec4}.  Introduce in $\mathcal B$ a local change of variables
 \beq\label{double ve}
 w^{3/2}(k)=-\frac{3\I t}{2}(g(k) - g_\pm(\I a)), \quad k \in \mathcal B,
 \eeq
 with the cut along the interval $J:=[\I c, \I a]\cap\ol{\mathcal B}$. We observe that 
 \beq\label{imp23}
 w^{3/2}(k)=P(a) \E^{\frac{3\pi \I}{4}} t (k-\I a)^{3/2}( 1 +O(k-\I a)), \quad P(a)>0.
 \eeq
 Indeed, from \eqref{deffung} and Lemma \ref{lemg} it follows that for $\I s\to \I a\pm 0$
 \[\aligned
 \re (-\I g(\I s))&=12\int_{a\pm 0}^s \left(\frac{c^2 - a^2}{2} +\xi - s^2\right)\sqrt{\frac{a+s}{c^2 - s^2}}\sqrt{a-s}\,ds\\ 
 &=-8\left(\frac{c^2 - 3 a^2}{2} +\xi\right)\sqrt{\frac{2a}{c^2 - a^2}}(a-s)^{3/2}(1 +O(a-s)).
 \endaligned\]
Since $a(\xi)$ is a monotonous function with $a(\frac{c^2}{3})=c$ and $a(-\frac{c^2}{2})=0$, this implies \eqref{imp23} with $P(a)>0$.
Thus, $w(k)$ is a holomorphic function in $\mathcal B$ with $w(\I a)=0$,\ $w^\prime(k)\neq 0$.   

Till now we did not specify a particular shape of the boundary  $\pa\mathcal B$ and the shape of the contour $\Sigma^{(2)}$ (cf. \eqref{hatsi} inside $\mathcal B$. Treating $w(k)$  as a conformal map, let us think of $\mathcal B$ as a preimage of a disc $\mathcal O$ of radius $P^{2/3}(a)\rho t^{2/3}$ centred at the origin. Since 
$w(k)=P_1(a)t^{2/3}(\I k +a)(1 +o(1))$, the function $w(k)$ maps the interval $[\I a, \I c]\cap\mathcal B$ into the negative half axis. We can always choose the contours
$\Sigma^{2)}\cap\mathcal B$ to be contained in the pre-image of the rays $\arg w=\pm\frac{2\pi \I}{3}$. 
%Denote also $\mathcal C_\pm=\partial\mathcal B^U\cap\{k: \pm \re k>0\}$.

Next, in $\mathcal B$ introduce the function
\beq\label{defrr}
r(k):=\frac{\sqrt{X(k)}}{F(k)}\,\E^{\mp \frac{\I\pi}{4}}\,\E^{\frac{\mp \I t B}{2}},\quad k\in\mathcal B\cap\{k: \pm \re k>0\} ,\eeq
where $X$ and $F$ are defined by \eqref{RX} and \eqref{forF} respectively, and $B=-2g_+(\I a)$.  By \eqref{thesame} and Lemma \ref{lemF} we conclude that
\[ r_+(k)=\frac{\sqrt{|\chi(k)|}}{F_+(k)}\,\E^{-\frac{\I t B}{2}},\quad r_-(k)=\frac{\sqrt{|\chi(k)|}}{F_-(k)}\,\E^{\frac{\I t B}{2}},\quad k\in [\I c, 0]\cap\mathcal B.\]
Therefore, 
\beq\label{imp45}
r_+(k)r_-(k)=1, \quad k\in J;\quad r_+(k)=r_-(k)\E^{-\I \Delta -\I t B},\quad k\in J^\prime,
\eeq
where  $J=[\I c, \I a]\cap\ol{\mathcal B}$ and
\[ J^\prime:=[\I a, \I b]=[\I a, 0]\cap \ol{\mathcal B}.\]
Denote also \[\mathcal L_1=\Sigma_1\cap \ol{\mathcal B}\cap\{  \re k\geq 0\};\quad 
\mathcal L_2=\Sigma_1\cap \ol{\mathcal B}\cap\{\  \re k\leq 0\}.\]

\begin{figure}[h]
\begin{tikzpicture}
\draw (0,0) {};
\draw (2.5,-2) ellipse (2cm and 1.5cm);
\draw (5,-1.5) arc (120:60:20mm);
\draw (9.5,-2) circle (1.8cm);

\draw (7.7,-2) -- (11.3,-2);
\draw (9.5,-2) -- (8.5, -0.5);
\draw (9.5,-2) -- (8.5, -3.5);

\draw (2.5,-0.5) -- (2.5, -3.5);
\draw (2.5,-2) to[out=35,in=230] (3.8,-0.85);
\draw (2.5,-2) to[out=145,in=-50] (1.2,-0.85);

\draw (4.5,-3) node {$\partial \mathcal B$};
\draw (6, -1) node {$w$};
\draw (2.7,-1.2) node {$J$};
\draw (2.75,-2.7) node {$J'$};
\draw (3.5, -1.6) node {$\mathcal L_1$};
\draw (1.5, -1.6) node {$\mathcal L_2$};
\draw (11.5,-3) node {$\partial \mathcal O$};
\draw (2.8, -2.1) node {$\I a$};
\draw (9.6, -2.3) node {$0$};

\draw [->, thick] (2.5,-1.4) -- (2.5, -1.5);
\draw [->, thick] (2.5,-2.5) -- (2.5, -2.6);
\draw [->, thick] (3.38, -1.3) -- (3.48, -1.2);
\draw [->, thick] (1.52, -1.2) -- (1.62, -1.3);

\draw [->, thick] (6.8, -1.4) -- (7, -1.5);

\draw [->, thick] (8.5, -2) -- (8.6, -2);
\draw [->, thick] (10.3, -2) -- (10.4, -2);
\draw [->, thick] (9.03, -1.3) -- (8.9, -1.1);
\draw [->, thick] (8.9, -2.9) -- (9.03, -2.7);

\draw [->, thick] (4.5, -2) -- (4.5,-1.9);
\draw [->, thick] (11, -1) -- (10.93,-0.9);
\end{tikzpicture}
\caption{The local change of variables $w(k)$.}
\end{figure}

Recall that the vector function $m^{(2)}(k)$ satisfies the jump condition $m_+^{(2)}(k)=m^{(2)}_-(k) v^{(2)}(k)$, with the jump matrix \eqref{jumpcond25}.
Now redefine $m^{(2)}(k)$ inside the domains $\mathcal B$ and $\mathcal B^*$ by
 \beq\label{lastdef}
 m^{(3)}(k)=\left\{ \begin{array}{ll}m^{(2)}(k)[r(k)]^{-\sigma_3}, & k\in \mathcal B,\\
 m^{(3)}(-k)\sigma_1, & k\in \mathcal B^*,\\
 m^{(2)}(k), & k\in \C\setminus(\ol{\mathcal B}\cup \ol{\mathcal B^*}).\end{array}\right.
 \eeq
 By use of \eqref{imp45} we get  $m^{(3)}_+(k)=m^{(3)}_-(k)v^{(3)}(k)$ with
 \beq\label{v4}
 v^{(3)}(k)=\left\{\begin{array}{ll}\begin{pmatrix} 1&0\\ \I\E^{-4/3 w(k)^{3/2}} &1\end{pmatrix}, & k\in J^\prime,\\
 \I\sigma_1, & k\in J,\\
 \begin{pmatrix} 1& \I\E^{4/3 w(k)^{3/2}}\\ 0 & 1\end{pmatrix}, & k\in \mathcal L_1,\\
  \begin{pmatrix} 1& -\I\E^{4/3 w(k)^{3/2}}\\ 0 & 1\end{pmatrix}, & k\in \mathcal L_2,\\
r(k)^{-\sigma_3}, & k\in \partial \mathcal B,\\
 \sigma_1 [v^{(3)}(-k)]\sigma_1,& k\in \partial \mathcal B^*\cup \Sigma^*_{\mathcal B},\\
v^{(2)}(k),& k\in \Sigma^{(2)}\setminus(\Sigma^*_{\mathcal B}\cup \Sigma_{\mathcal B}),\end{array}\right.
\eeq 
where $\Sigma^{(2)}$ is defined by \eqref{hatsi},
\beq\label{recont} 
\Sigma_{\mathcal B}=J\cup J^\prime\cup \mathcal L_1\cup\mathcal L_2,\quad \Sigma^*_{\mathcal B}=\{k: -k\in \Sigma_{\mathcal B}\},
\eeq
with the orientation preserving symmetries for starred contours. In particular, $\pa\mathcal B^*$ should be oriented counterclockwise. 

We observe  that    transformation \eqref{lastdef} applied in $\mathcal B$ to the matrix model problem solution,
\beq\label{123}
M(k):=M^{(\text{mod})}(k)[r(k)]^{-\sigma_3},\quad k\in \mathcal B,
\eeq
leads to wiping out of the jump along $J^\prime$, i.e. in $\mathcal B$ the matrix $M$ satisfies the jump condition $M_+(k)=\I M_-(k)\sigma_1$, $k\in J$. Next by \eqref{double ve},
 the function $w^{1/4}(k)$ has the following jump along the interval $J$:
\[w_+^{1/4}(k)=w_-^{1/4}(k) \I,\quad k\in J.\]
Recall that $ \mathcal O=w(\mathcal B)$. It is now straightforward to check that  the matrix 
\[N(w)=\frac{1}{\sqrt{2}}\begin{pmatrix} w^{1/4}& w^{1/4}\\  - w^{-1/4} & w^{-1/4}\end{pmatrix}, \quad w\in \ol{\mathcal O},\]
 solves the jump problem 
 \[N_+(w(k))=\I N_-(w(k))\sigma_1,\quad k\in J.\]
Therefore, in  $ \mathcal B$ we have $M(k)=  H(k) N(w(k))$, where $ H(k)$ is a holomorphic matrix function in $ \mathcal B$. Moreover, since $\det N(w)=\det [r(k)^{\sigma_3}]=1$, we have
 \beq\label{obr3}\det  H(k)=\det M^{\text{mod}}(k)=\det M(k).\eeq
According to \eqref{1} we get then
\beq\label{main55}
M^{mod}(k)= H(k)N(w(k))r(k)^{\sigma_3},\quad k\in \pa\mathcal B.\eeq
Next, by property $\bf{(b)}$ of Lemma \ref{lemg}  $w_+(k)^{3/2} =-w_-(k)^{3/2}$, $k \in J$, that is
\[v^{(3)}(k)=d_-(k)^{\sigma_3} \mathcal S\, d_+(k)^{-\sigma_3},\quad k\in \mathcal B,\] where
\[ \quad d(k):= \tilde d(w(k)), \quad \tilde d(w)=\E^{ 2/3 w^{3/2}},\]
and
\[ \mathcal S=\left\{\begin{array}{ll} \I\sigma_1, & k\in J,\\
\begin{pmatrix}1&0\\ \I& 1\end{pmatrix}, & k\in J^\prime,\\
\begin{pmatrix} 1& \I\\ 0&1\end{pmatrix}, & k\in\mathcal L_1,\\
\begin{pmatrix} 1& - \I\\ 0&1\end{pmatrix}, & k\in\mathcal L_2.\end{array}\right.
\]
Let us consider the constant matrix $\mathcal S$ as the jump matrix on the contour $\Gamma:=w(\Sigma_{\mathcal B})$ (see \eqref{recont}). Let  $\mathcal A(w)$ be the matrix solution of the jump problem 
\[\mathcal A_+(w)=\mathcal A_-(w) \mathcal S,\quad w\in \Gamma,\]
satisfying the boundary condition
\[
\mathcal A(w)=  N(w)\Psi(w)\tilde d(w)^{\sigma_3},\quad w\in\partial \mathcal O,\quad t\to\infty,
\]
where \[\Psi(w)=\id +\frac{C}{w^{3/2}}(1 + O(w^{-3/2})),\quad w\to \infty,\]
is an invertible matrix and
 $C$ is a constant matrix with respect to $w$, $t$ and $\xi$.
The solution $\mathcal A(w)$ can be expressed via the Airy functions and their derivatives in a standard way (see, for example,\cite{dkmvz}, \cite{Bleher} Chapter 3, \cite{GGM} or \cite{AELT}). In particular,  in the domain between the contours $w(J^\prime)$ and $w(\mathcal L_1)$ we have
\[ \mathcal A(w)=\sqrt{2\pi}\begin{pmatrix} -y_1^\prime(w)&\I y_2^\prime(w)\\-y_1(w)& \I y_2(w)\end{pmatrix},\]
where
$y_1(w)=\mathrm{Ai}(w)$ and $y_2(w)=\E^{-\frac{2\pi \I}{3}} \mathrm{Ai}(\E^{-\frac{2\pi \I}{3}} w)$. The precise formula for $\mathcal A(w)$ in the other domains can be obtained by simple multiplication on the jump matrix $\mathcal S$, but it is not important for us. 

Define the matrix  
\[M^{par}(k):=  H(k)\mathcal A(w(k))d(k)^{-\sigma_3},\quad k\in\mathcal B\setminus\Sigma_{\mathcal B}.\] This matrix then solves in $\mathcal B$ the jump problem
\beq\label{jumppar}
M_+^{par}(k)= M_-^{par}(k)v^{(3)}(k),\quad k\in \Sigma_{\mathcal B}=J\cup J^\prime\cup \mathcal L_1\cup\mathcal L_2,
\eeq
and satisfies for sufficiently large $t$ the boundary condition 
\beq\label{main66}
M^{par}_+(k)= H(k) N(w(k))\Psi(w(k))=M(k)\Psi(w(k)),\quad k\in\pa\mathcal B,
\eeq
where $M(k)$ is defined via \eqref{123}, \eqref{defrr}.
In $\mathcal B^*$ we define $M^{par}(k)$ by symmetry 
\[M^{par}(k)=\sigma_1 M^{par}(-k)\sigma_1.\]

\section{Completion of asymptotical analysis}

The aim of this section is to establish that the solution $m^{(3)}(k)$ given by \eqref{lastdef} is well approximated by $\rI M^{\mathrm{par}}(k)$ inside the domain $\mathcal B\cup \mathcal B^*$,
and by $\rI M^{\mathrm{mod}}(k)$ in $\C\setminus(\mathcal B\cup\mathcal B^*)$.
We follow the well-known approach via singular integral equations (see e.g., \cite{dz}, \cite{GT}, \cite{its} Chapter 4, \cite{len}).
%To simplify notation we introduce (see formula \eqref{recont})
%\[
%\ti\Sigma=\Sigma\cup \pa\mathcal  B\cup\pa\mathcal B^*.
%\]
Set
\beq\label{dehatm}
\hat m(k)=m^{(3)}(k) (M^{\text{as}}(k))^{-1},\quad M^{\text{as}}(k):=\begin{cases} M^{\mathrm{par}}(k), & k\in(\mathcal B\cup \mathcal B^*),\\
 M^{\mathrm{mod}}(k), & k\in\C\setminus(\mathcal B\cup \mathcal B^*).
\end{cases}
\eeq
Formula \eqref{jumppar} implies that $\hat m$ does not have jumps inside $\mathcal B\cup \mathcal B^*$. Moreover, from \eqref{lastdef}  and item {\it (6)} of Theorem \ref{mainr} this vector is a holomorphic bounded funtion inside the strip between $\mathcal C$ and $\mathcal C^*$. Let us compute the jump of this vector on $\pa\mathcal B$ by use of \eqref{v4},  \eqref{123}, \eqref{main55}, and \eqref{main66}:
\[\aligned\hat m_+&=m_+^{(3)}\left(M^{\mathrm{par}}_+\right)^{-1}=m_-^{(3)}r^{-\sigma_3}\Psi^{-1}M^{-1}_+= m_-^{(3)}\left(M^{\mathrm{mod}}_-\right)^{-1} M^{\mathrm{mod}}_- r^{-\sigma_3}\Psi^{-1}M_+^{-1}\\
& = \hat m_- M^{\mathrm{mod}}_- r^{-\sigma_3}\Psi^{-1} r^{\sigma_3} \left(M^{\mathrm{mod}}_+\right)^{-1} =\hat m_- M_+\Psi^{-1} M_+^{-1} .\endaligned\] Here we took into account \eqref{123} and the fact that  $M^{\mathrm{mod}}$ does not have a jump on $\pa\mathcal B$. Note also that both matrices $M_+(k)$ and $M^{\mathrm{mod}}(k)$ are  bounded with respect to $t$ uniformly on $\pa\mathcal B$. 

Next, the structure of the matrix $\Psi(w(k))$ implies that
\[ \Psi^{-1}(w(k))=\id +\frac{\mathcal F(k,t)}{t (g(k) - g_\pm(\I a))} ,\quad  \|\mathcal F(k,t)\|\leq O(1), \quad t\to\infty,\]
where the matrix norm estimate $O(1)$ is uniform with respect to $k$ on the compact $\pa\mathcal B\cup\pa\mathcal B^*,$ and uniform with respect to $\xi\in \mathcal I_\varepsilon$. 
Hence $\hat m(k)$ solves the jump problem
 \[
\hat m_+(k)=\hat m_-(k)\hat v(k),
\]
where (cf. \eqref{hatsi}, \eqref{estmoro}):
\[
\hat v(k)=\begin{cases}
\id +M(k) \frac{\mathcal F(k,t)}{t (g(k) - g_\pm(\I a))}M(k)^{-1}, & k\in \partial \mathcal B,\\
\sigma_1\hat v(-k)\sigma_1 , & k\in \partial \mathcal B^*,\\
M_-^{\mathrm{mod}}(k)v^{(3)}(k)(M_+^{\mathrm{mod}}(k))^{-1}, & k\in \Sigma_\rho,\end{cases}
\]
and satisfies the symmetry and normalization conditions:
\[
\hat m(k)=\hat m(-k)\sigma_1,\qquad  \hat m\to (1, \ 1), \  \ \ k\to\infty.
\]

Abbreviate  $W(k)=\hat v(k) -\id$. Recall the estimate \eqref{estmoem}. Hence 
\begin{equation}\label{Wi}
W(k)= \begin{cases} \frac{1}{t (g(k) - g_\pm(\I a))}
M_+(k)\,\mathcal F(k,t)\, M^{-1}_+(k), & k\in \partial\mathcal B, \\
\sigma_1 W(-k) \sigma_1,& k\in \partial\mathcal B^*, \\
M_-^{\mathrm{mod}}(k)(v^{(3)}(k)- v^{\text{mod}}(k))(M_+^{\mathrm{mod}}(k))^{-1}, & k\in \Sigma_\rho\setminus [\I\rho, -\I\rho],\\ 0 & k\in [\I\rho, -\I \rho],\end{cases}
\end{equation}
where we treat $v^{\text{mod}}$ as in \eqref{vm}. Thus the error vector $\hat m(k)$ has jumps on the contour 
\[\hat\Sigma=\Sigma_\rho\cup\pa\mathcal B\cup\pa \mathcal B^*\setminus [\I \rho, -\I \rho]\]
only. This contour does not have  lines inside vicinities of singular points $\I a, -\I a, 0$.
We observe that for all $(x,t)\in \mathcal D_\varepsilon$ the matrix $W(k)$ is continuous on any smooth part of the contour $\hat\Sigma$ and bounded with respect to $k$. Moreover, due to \eqref{estmoem} and \eqref{v4} we have \[\|k^j(v^{(3)}(k)-v^{\text{mod}}(k))\|_{L^{p}(\Sigma_\rho\setminus [\I\rho, -\I\rho])} = O(e^{-C(\varepsilon) t}), \quad p \in [1,\infty], \quad j = 0,1,2,\] (the estimates on the higher moments will be used later). Here we took into account that the reflection coefficient $R(k)$ decays as $O(k^{-6})$ under condition \eqref{decay}. Thus, using \eqref{Wi} and \eqref{obr3} we get \begin{lemma}\label{lemimp}
The following estimates hold uniformly with respect to $\xi\in\mathcal I_\varepsilon$ and $(x,t)\in \mathcal D_\varepsilon$:
\begin{equation}
\label{2w}\|k^j W(k) \|_{L^p(\hat\Sigma)} \leq C(\varepsilon) t^{-1 }, \quad p \in [1,\infty], \quad j = 0,1,2.\end{equation}
\end{lemma}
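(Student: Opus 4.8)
The plan is to estimate the error matrix $W(k)=\hat v(k)-\id$ on each smooth piece of $\hat\Sigma$ separately and to retain the dominant contribution. By construction $\hat\Sigma$ omits the band $[\I c,-\I c]$: there $v^{(4)}=v^{\mathrm{mod}}$ (on $[\I b,-\I b]$ this is noted explicitly, while on $[\I c,\I a]\setminus\overline{\mathcal B^U}$ and its mirror one has $v^{(4)}=v^{(3)}=\I\sigma_1$, which again agrees with the model jump), so the error vector $\hat m$ has no jump and $W\equiv 0$ across the whole band. Moreover, by the symmetry $W(k)=\sigma_1 W(-k)\sigma_1$ recorded in \eqref{Wi} and the invariance of $\|\cdot\|$ under conjugation by $\sigma_1$, the $L^p$-norms of $k^jW$ over $\partial\mathcal B^L$ and over the lower arcs coincide with those over $\partial\mathcal B^U$ and the upper arcs; hence it suffices to bound $k^jW$ on $\partial\mathcal B^U$ and on the arcs of $\Sigma^U\cup\Sigma_1^U$ lying outside $\overline{\mathcal B^U}$.

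The main term comes from $\partial\mathcal B^U$, where by \eqref{Wi} and \eqref{Psiw}
\[
W(k)=\frac{1}{t\,(g(k)-g_\pm(\I a))}\,M_+(k)\,\mathcal F(k,t)\,M_+^{-1}(k).
\]
I would estimate the three factors separately. The matrices $M_+(k)$ and $M^{\mathrm{mod}}(k)$ are bounded on $\partial\mathcal B^U$ uniformly in $t$ and $\xi$, and $\|\mathcal F(k,t)\|=O(1)$ by \eqref{Psiw}. For the scalar prefactor I use the conformal change of variables \eqref{double ve}: on $\partial\mathcal B^U$ one has $t(g(k)-g_\pm(\I a))=\frac{2\I}{3}w(k)^{3/2}$ with $|w(k)|=P^{2/3}(a)\rho\,t^{2/3}$, hence $|t(g(k)-g_\pm(\I a))|=\frac{2}{3} P(a)\rho^{3/2}\,t$ and the prefactor is $O(t^{-1})$ uniformly. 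For $M_+^{-1}$ I write $\|M_+^{-1}\|\le C\,\|M_+\|/|\det M|$; since $\det M=\det M^{\mathrm{mod}}$ by \eqref{obr3} and $\det M^{\mathrm{mod}}$ is independent of $k$, the lower bound \eqref{deter} gives $\|M_+^{-1}(k)\|\le C\,t^{\gamma}$. Multiplying, $\|W(k)\|\le O(t^{-1})\cdot O(t^{\gamma})=O(t^{-1+\gamma})$ uniformly on $\partial\mathcal B^U$. As $k$ stays in a fixed neighbourhood of $\I a$ the weight $|k^j|$ is bounded for $j=0,1$, and $\partial\mathcal B^U$ has fixed finite length, so $\|k^j W\|_{L^p(\partial\mathcal B^U)}\le C(\varepsilon)\,t^{-1+\gamma}$ for every $p\in[1,\infty]$.

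On the remaining arcs $\hat\Sigma\setminus\partial\mathcal B$ the matrix $M^{\mathrm{mod}}$ has no jump, so $W(k)=M^{\mathrm{mod}}(k)\bigl(v^{(4)}(k)-\id\bigr)(M^{\mathrm{mod}}(k))^{-1}$. Here $M^{\mathrm{mod}}$ is bounded (these arcs avoid $\mathcal G$ and $M^{\mathrm{mod}}$ tends to a constant at infinity), and again $\|(M^{\mathrm{mod}})^{-1}\|\le C\,t^{\gamma}$ by \eqref{deter}, using that $\det M^{\mathrm{mod}}$ is constant in $k$. Distributing the weight as $k^j W=M^{\mathrm{mod}}\,[k^j(v^{(4)}-\id)]\,(M^{\mathrm{mod}})^{-1}$ and invoking the given bound $\|k^j(v^{(4)}-\id)\|_{L^p(\hat\Sigma\setminus\partial\mathcal B)}=O(\E^{-C(\varepsilon)t})$ for $j=0,1,2$, I obtain $\|k^j W\|_{L^p(\hat\Sigma\setminus\partial\mathcal B)}\le C\,t^{\gamma}\E^{-C(\varepsilon)t}=o(t^{-1+\gamma})$. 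Adding this to the contribution of $\partial\mathcal B^U$ (and, via symmetry, of $\partial\mathcal B^L$) yields \eqref{2w}.

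The only delicate point is the $\partial\mathcal B^U$ estimate. The factor $t^{-1}$ coming from the parametrix matching would by itself give the sharper rate $t^{-1}$, but it is multiplied by $\|M_+^{-1}\|$, which blows up precisely where $\det M^{\mathrm{mod}}$ degenerates, that is, as $(x,t)$ approaches the exceptional set $\mathcal O(\xi)$. It is exactly the quantitative bound \eqref{deter}, valid only on $\mathcal D_\varepsilon$, that converts this into the loss $t^{\gamma}$ and produces the rate $t^{-1+\gamma}$; this is why the restriction to $\mathcal D_\varepsilon$ and the choice $\gamma>0$ are unavoidable. Everything else---the boundedness of $M^{\mathrm{mod}}$ off $\mathcal G$, the $O(1)$ bound on $\mathcal F$, and the exponential smallness of $v^{(4)}-\id$ away from the bands---is routine.
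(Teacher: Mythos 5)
Your argument is correct and follows exactly the route the paper takes (the paper compresses it into one sentence citing \eqref{Wi}, \eqref{deter} and \eqref{obr3} after noting the exponential smallness of $v^{(4)}-\id$ off $\partial\mathcal B$ and the uniform boundedness of $M_+$ and $M^{\mathrm{mod}}$): the $O(t^{-1})$ matching error on $\partial\mathcal B^U$ is degraded to $O(t^{-1+\gamma})$ by $\|M_+^{-1}\|\le C\|M_+\|/|\det M^{\mathrm{mod}}|=O(t^{\gamma})$, the rest is exponentially small, and symmetry handles the lower half-plane. Your explicit remark that $W\equiv 0$ on $[\I c,\I a]\setminus\overline{\mathcal B^U}$ because $v^{(4)}$ there coincides with the model jump (so that \eqref{Wi} should be read as $M^{\mathrm{mod}}_-(v^{(4)}-v^{\mathrm{mod}})(M^{\mathrm{mod}}_+)^{-1}$ across the bands) is a useful clarification of a point the paper leaves implicit.
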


Now we are ready to apply the technique of singular integral equations. Since this is well known (see, for example, \cite{dz}, \cite{GT}, \cite{len})
we will be brief and only list the necessary notions and estimates. 

Let $\mathfrak C$ denote the Cauchy operator associated with $\hat\Sigma$:
\[
(\mathfrak C h)(k)=\frac{1}{2\pi\I}\int_{\hat\Sigma}h(s)\frac{ds}{s-k}, \qquad k\in\C\setminus\hat\Sigma,
\]
where $h= \begin{pmatrix} h_1 & h_2 \end{pmatrix}\in L^2(\hat\Sigma)$. 
Let  $\mathfrak C_+ f$ and $\mathfrak C_- f$ be its non-tangential limiting values from the left and right sides of $\hat\Sigma$ respectively.

As usual, we introduce the operator $\mathfrak C_{W}:L^2(\hat\Sigma)\cup L^\infty(\hat\Sigma)\to
L^2(\hat\Sigma)$ by formula $\mathfrak C_{W} f=\mathfrak C_-(f W)$, where $W$ is our error matrix \eqref{Wi}. 
Then,
\[
\|\mathfrak C_{W}\|_{L^2(\hat\Sigma)\to L^2(\hat\Sigma)}\leq C\|W\|_{L^\infty(\hat\Sigma)}\leq O(t^{-1}),
\] 
as well as
\beq\label{6w}
\|(\id - \mathfrak C_{W})^{-1}\|_{L^2(\hat\Sigma)\to L^2(\hat\Sigma)}\leq \frac{1}{1-O(t^{-1})}
\eeq
for sufficiently large $t$. Consequently, for $t\gg 1$, we may define a vector function
\[
\mu(k) =(1, \ 1) + (\id - \mathfrak C_{W})^{-1}\mathfrak C_{W}\big((1, \ 1)\big)(k).
\]
Then by \eqref{2w} and \eqref{6w}
\begin{align}\nn
\|\mu(k) - (1, \ 1)\|_{L^2(\ti\Sigma)} &\leq \|(\id - \mathfrak C_{W})^{-1}\|_{L^2(\ti\Sigma)\to L^2(\ti\Sigma)} \|\mathfrak C_{-}\|_{L^2(\ti\Sigma)\to L^2(\ti\Sigma)} \|W\|_{L^\infty(\ti\Sigma)}\\
&= O(t^{-1}).\label{estmu}
\end{align}
With the help of $\mu$, \eqref{dehatm} can be represented as 
\[
\hat m(k)=(1, \ 1) +\frac{1}{2\pi\I}\int_{\hat\Sigma}\frac{\mu(s) W(s)ds}{s-k},
\]
and in virtue of \eqref{estmu} and Lemma \ref{lemimp} we obtain as $k\to \I\infty:$
\[
\hat{m}(k) = (1, \ 1) + \frac{1}{2\pi\I } \int_{\hat\Sigma} \frac{(1, \ 1) W(s)}{s-k} ds + E(k),
\]
where 
\[
|E(k)|\leq \frac{C}{\im k}\|W\|_{L^2(\hat\Sigma)}\|\mu (k)- (1, \ 1)\|_{L^2(\hat\Sigma)}\leq \frac{O(t^{-2})}{\im k},
\] 
 where $O(t^{-2})$ is uniformly bounded with respect to $\xi\in \mathcal I_\varepsilon$, $(x,t)\in \mathcal D_\varepsilon$ and $k \rightarrow \I\infty$.
In the same regime $\re k=0,\  \im k \to +\infty$, we have
\begin{align*}
\frac{1}{2\pi\I } \int_{\hat\Sigma} \frac{(1, \ 1) W(s)}{k-s} ds &=  \frac{f_0(\xi,t)}{2\I k t }(1, \ -1) + \frac{f_1(\xi,t)}{2 k^2 t} (1, \ 1)\\
& + O(t^{-1})O(k^{-3}) +O(t^{-2})O(k^{-1}),
\end{align*}
where $f_{0,1}(\xi,t)$ are uniformly bounded for $t \rightarrow \infty$ and $\xi \in \mathcal{I}_\varepsilon$. Furthermore  $O(k^{-s})$ are vector functions depending on $k$ only and $O(t^{-s})$ are as above. Hence,
\[
m^{(3)}(k) = \hat m(k) M^{\text{mod}}(k) 
= m^{\text{mod}}(k) + \frac{f_0(\xi,t)}{2\I k t} (1, \ -1)M^{\text{mod}}(k)
\]
\[
 + \frac{f_1(\xi,t)}{2 k^2 t}m^{\text{mod}}(k) + O(t^{-1})O(k^{-3}) +O(t^{-2})O(k^{-1}).
\]

Now we are in a position to apply \eqref{main1}, making use of \eqref{initrue}, \eqref{osnova},\eqref{osnova2}, \eqref{qmodl}. Note that since all conjugation steps in the vicinity of $\infty$ involved diagonal matrices with determinant $1$, we have for the solution to IVM RHP from Theorem \ref{thm:vecrhp}:
\[
m_1(k)m_2(k) = m_1^{(3)}(k)m_2^{(3)}(k) = m_1^{mod}(k)m_2^{mod}(k) + O(t^{-1})O(k^{-2}).
\]
Here we used that the entries of $M^{mod}(k)$ are uniformly bounded for $\xi \in \mathcal I_\varepsilon$ and that the $k^{-1}$ term disappears by symmetry  \eqref{eq:symcond}.
Theorem \ref{theormain} is proved.

\vskip 5 mm
\noindent{\bf Acknowledgments.} We are grateful to Alexander Minakov for useful discussions. I.E. is indebted to the Department of Mathematics at the University of Vienna for its hospitality and support during the winter of 2022, where this work was done. I.E. is partially supported by the program "Support of priority research and scientific and technical developments" by  the National Academy of Sciences of Ukraine.

\vskip 3mm

\end{document}